\newcommand{\be}{\begin{equation}}
\newcommand{\ee}{\end{equation}}
\newcommand{\bea}{\begin{eqnarray}}
\newcommand{\eea}{\end{eqnarray}}
\numberwithin{equation}{section}
\newcounter{thmcounter}
\numberwithin{thmcounter}{section}
\theoremstyle{definition}
\newtheorem*{acknowledgements}{Acknowledgements}
\newtheorem{definition}[thmcounter]{Definition}
\newtheorem{remark}[thmcounter]{Remark}
\theoremstyle{plain}
\newtheorem{corollary}[thmcounter]{Corollary}
\newtheorem{lemma}[thmcounter]{Lemma}
\newtheorem{proposition}[thmcounter]{Proposition}
\newtheorem{theorem}[thmcounter]{Theorem}
\def\BC{\mathrm{BC}}                        %
\def\1{{\boldsymbol 1}}                     %
\def\0{{\boldsymbol 0}}                     %
\def\cC{{\mathcal C}}                       %
\def\cD{{\mathcal D}}                       %
\def\cG{{\mathcal G}}                       %
\def\cH{{\mathcal H}}                       %
\def\cL{{\mathcal L}}                       %
\def\cP{{\mathcal P}}                       %
\def\cZ{{\mathcal Z}}                       %
\def\tr{\mathrm{tr}}                        %
\def\diag{\mathrm{diag}}                    %
\def\ri{{\rm i}}                            %
\def\sgn{{\rm sgn}}                         %
\def\C{\mathbb{C}}                          %
\def\D{\mathbb{D}}                          %
\def\R{\mathbb{R}}                          %
\def\T{\mathbb{T}}                          %
\def\SL{{\rm SL}}                           %
\def\SB{{\rm SB}}                           %
\def\UN{{\rm U}}                            %
\def\SU{{\rm SU}}                           %
\def\su{\mathfrak{su}}                      %
\def\fH{\mathfrak{H}}                       %
\def\sV{{\mathsf V}}                        %
\def\sv{{\mathsf v}}                        %
\def\sw{{\mathsf w}}                        %
\def\ds{\left.\frac{d}{ds}\right\vert_{s=0}}%
\begin{document}
\begin{center}
{\large\bf
On a Poisson-Lie deformation of the $\boldsymbol{\BC_n}$ Sutherland system}
\end{center}

\medskip
\begin{center}
L.~Feh\'er${}^{a,b}$ and T.F.~G\"orbe${}^a$\\

\bigskip
${}^a$Department of Theoretical Physics, University of Szeged\\
Tisza Lajos krt 84-86, H-6720 Szeged, Hungary\\
e-mail: tfgorbe@physx.u-szeged.hu

\medskip
${}^b$Department of Theoretical Physics, WIGNER RCP, RMKI\\
H-1525 Budapest, P.O.B.~49, Hungary\\
e-mail: lfeher@physx.u-szeged.hu
\end{center}

\medskip
\begin{abstract}
A deformation of the classical trigonometric $\BC_n$ Sutherland system is
derived via Hamiltonian reduction of the Heisenberg double of $\SU(2n)$.
We apply a natural Poisson-Lie analogue of the Kazhdan-Kostant-Sternberg type
reduction of the free particle on $\SU(2n)$ that leads to the $\BC_n$ Sutherland
system. We prove that this yields a Liouville integrable Hamiltonian system
and construct a globally valid model of the smooth reduced phase space wherein the
commuting flows are complete. We point out that the reduced system, which contains
3 independent coupling constants besides the deformation parameter, can be recovered
(at least on a dense submanifold) as a singular limit of the standard 5-coupling deformation
due to van Diejen. Our findings complement and further develop those obtained recently
by Marshall on the hyperbolic case by reduction of the Heisenberg double of $\SU(n,n)$.
\end{abstract}

{\linespread{0.8}\tableofcontents}

\newpage
\section{Introduction}
\label{sec:1}

Models amenable to exact treatment provide key paradigms for our understanding of
natural phenomena and form a fertile field of research crossing the border of physics
and mathematics.
The study of integrable Hamiltonian systems is a very active
subfield with particularly strong ties to group theory and symplectic geometry.
For reviews, see e.g.~\cite{FT,N,RuijR,BBT,E}.
One of the time-honoured approaches to such systems consists in viewing them as
`shadows' of natural free systems enjoying high symmetries. This is alternatively
known as the projection method or as Hamiltonian reduction \cite{OP1,OP2}. The list of the
free `master systems' is monotonically expanding in time. To name a few, it includes
free particles on Lie groups together with their Poisson-Lie symmetric deformations
and quasi-Hamiltonian analogues. For example, it was shown in the pioneering paper
\cite{KKS} that the integrable many-body system of Sutherland \cite{Sut}, which describes
particles on the circle interacting via a pair potential given by the inverse square
of the chord-distance, is a reduction of the free particle on the unitary group
$\UN(n)$. Various deformations of the Sutherland system due to Ruijsenaars and
Schneider \cite{RS86,RIMS95} were derived \cite{FK1,FK2} from Poisson-Lie symmetric
free motion on $\UN(n)$, whose phase space is the Heisenberg double \cite{STS} of
the Poisson-Lie group $\UN(n)$, and from the internally fused quasi-Hamiltonian
 double \cite{AMM} of $\UN(n)$, which arose from Chern-Simons field theory.

The projection method was enriched by an interesting recent contribution of Marshall
\cite{M}, who obtained an integrable Ruijsenaars-Schneider (RS) type system by reducing the
Heisenberg double of $\SU(n,n)$, which directly motivated our present
work\footnote{The relation is `symmetric' as the problem studied by Marshall was
originally suggested by one of us.}. Here, we shall deal with a reduction of the
Heisenberg double of $\SU(2n)$ and derive a Liouville integrable Hamiltonian system
related to Marshall's one in a way similar to the connection between the original
trigonometric Sutherland system and its hyperbolic variant. Although this is
essentially analytic continuation, it should be noted that the resulting systems are
qualitatively different in their dynamical characteristics and global features. In
addition, what we hope makes our work worthwhile is that our treatment is different
from the one in \cite{M} in several respects and we go considerably further regarding
the global characterization of the reduced phase space and the completeness
of the relevant Hamiltonian flows.

The main Hamiltonian of the system that we obtain can be displayed as follows
\begin{multline}
H(\hat p,\hat q;x,u,v)=\frac{e^{-2u}+e^{2v}}{2}\sum_{j=1}^ne^{-2\hat p_j}+\\
\qquad-\sum_{j=1}^n\cos(\hat q_j)\big[1-(1+e^{2(v-u)})e^{-2\hat p_j}
+e^{2(v-u)}e^{-4\hat p_j}\big]^{\tfrac{1}{2}}\prod_{\substack{k=1\\(k\neq j)}}^n
\bigg[1-\frac{\sinh^2\big(\frac{x}{2}\big)}{\sinh^2(\hat p_j-\hat p_k)}
\bigg]^{\tfrac{1}{2}}.
\label{I1}
\end{multline}
Here $u$, $v$ and $x$ are real coupling parameters that will be assumed to satisfy
\be
u<v,\quad v\neq-u\quad\text{and}\quad x\neq 0.
\label{I2}
\ee
The components of $\hat q$ parametrize the torus $\T_n$ by $e^{\ri\hat q}$ and
$\hat p$ belongs to the domain
\be
\cC_x:=\{\hat p\in\R^n\,\mid
0>\hat p_1,\ \hat p_k-\hat p_{k+1}>|x|/2\ (k=1,\dots,n-1)\}.
\label{I3}
\ee
The dynamics is then defined via the symplectic form
\be
\hat\omega=\sum_{j=1}^nd\hat q_j\wedge d\hat p_j.
\label{I4}
\ee
It will be shown that this system results by restricting a reduced free system
on a dense open submanifold of the pertinent reduced phase space. The Hamiltonian
flow is complete on the full reduced phase space, but it can leave the submanifold
parametrized by $\cC_x\times\T_n$. By glancing at the form of the Hamiltonian, one
may say that it represents an RS type system coupled to external fields. Since
differences of the `position variables' $\hat p_k$ appear, one feels that this
Hamiltonian somehow corresponds to an A-type root system.

To better understand the nature of this model, let us now introduce new Darboux
variables $q_k$, $p_k$ following essentially \cite{M} as
\be
\exp(\hat p_k)=\sin(q_k)\quad\text{and}\quad\hat q_k=p_k\tan(q_k).
\label{I5}
\ee
In terms of these variables $H(\hat p,\hat q;x,u,v)=\cH_1(q,p;x,u,v)$ with the
`new Hamiltonian'
\begin{multline}
\cH_1(q,p; x,u,v)=\frac{e^{-2u}+e^{2v}}{2}\sum_{j=1}^n\frac{1}{\sin^2(q_j)}\\
-\sum_{j=1}^n\cos(p_j\tan(q_j))\bigg[1-\frac{1+e^{2(v-u)}}{\sin^2(q_j)}
+\frac{4e^{2(v-u)}}{4\sin^2(q_j)-\sin^2(2q_j)}\bigg]^{\tfrac{1}{2}}\\
\times\prod_{\substack{k=1\\(k\neq j)}}^n
\bigg[1-\frac{2\sinh^2\big(\frac{x}{2}\big)\sin^2(q_j)\sin^2(q_k)}
{\sin^2(q_j-q_k)\sin^2(q_j+q_k)}
\bigg]^{\tfrac{1}{2}}.
\label{I6}
\end{multline}
Remarkably, only such combinations of the new `position variables' $q_k$ appear that
are naturally associated with the $\BC_n$ root system and the Hamiltonian $\cH_1$
enjoys symmetry under the corresponding Weyl group. Thus now one may wish to attach
the Hamiltonian $\cH_1$ to the $\BC_n$ root system. Indeed, this interpretation is
preferable for the following reason. Introduce the scale parameter (corresponding to
the inverse of the velocity of light in the original RS system) $\beta>0$ and make
the substitutions
\be
u\to\beta u,\quad
v\to\beta v,\quad
x\to\beta x,\quad
p\to\beta p,\quad
\hat\omega\to\beta\hat\omega.
\label{I7}
\ee
Then consider the deformed Hamiltonian
\be
\cH_\beta(q,p;x,u,v):=\cH_1(q,\beta p;\beta x,\beta u,\beta v).
\label{I8}
\ee
The point is that one can then verify the following relation:
\be
\lim_{\beta\to 0}\frac{\cH_\beta(q,p;x,u,v)-n}{\beta^2}
=H_{\BC_n}^{\text{Suth}}(q,p;\gamma,\gamma_1,\gamma_2),
\label{I9}
\ee
where
\be
H_{\BC_n}^{\text{Suth}}=\frac{1}{2}\sum_{j=1}^np_j^2
+\sum_{1\leq j<k\leq n}\bigg[\frac{\gamma}{\sin^2(q_j-q_k)}
+\frac{\gamma}{\sin^2(q_j+q_k)}\bigg]
+\sum_{j=1}^n \frac{\gamma_1}{\sin^2(q_j)}
+\sum_{j=1}^n \frac{\gamma_2}{\sin^2(2q_j)}
\label{I10}
\ee
is the standard trigonometric $\BC_n$ Sutherland Hamiltonian with coupling constants
\be
\gamma=\frac{x^2}{4},\quad
\gamma_1=2uv,\quad
\gamma_2=2(v-u)^2.
\label{I11}
\ee
Note that the domain of the variables $\hat q,\hat p$, and correspondingly that of
$q$, $p$ also depends on $\beta$, and in the $\beta \to 0$ limit it is easily seen
that we recover the usual $\BC_n$ domain
\be
\frac{\pi}{2}>q_1>q_2>\dots>q_n>0,\quad p\in\R^n.
\label{I12}
\ee
In conclusion, we see that $H$ in its equivalent form $\cH_\beta$ is a 1-parameter
deformation of the trigonometric $\BC_n$ Sutherland Hamiltonian. We remark in
passing that the conditions \eqref{I2} imply that $\gamma_2>0$ and
$4\gamma_1+\gamma_2>0$, which guarantee that the flows of $H_{\BC_n}^{\text{Suth}}$
are complete on the domain \eqref{I12}.

Marshall \cite{M} obtained similar results for an analogous deformation of the
hyperbolic $\BC_n$ Sutherland Hamiltonian. His deformed Hamiltonian differs from
\eqref{I1} above in some important signs and in the relevant domain of the `position
variables' $\hat p$. Although in our impression the completeness of the reduced
Hamiltonian flows was not treated in a satisfactory way in \cite{M}, the
completeness proof that we shall present can be adapted to Marshall's case as well.

It is natural to ask how the system studied in the present paper (and its cousin in
\cite{M}) is related to van Diejen's \cite{vD} 5-coupling trigonometric $\BC_n$
system? It was shown already in \cite{vD} that the 5-coupling trigonometric system
is a deformation of the $\BC_n$ Sutherland system, and later \cite{vD2} several other
integrable systems were also derived as its (`Inozemtsev type' \cite{I}) limits.
Motivated by this, we can show that the Hamiltonian \eqref{I1} is a
singular\footnote{We call the limit singular since it involves sending some shifted
position variables to infinity.} limit of van Diejen's general Hamiltonian.
Incidentally, a Hamiltonian of Schneider \cite{S} can be viewed as a subsequent singular
limit of the Hamiltonian \eqref{I1}. Schneider's system was mentioned in \cite{M},
too, but the relation to van Diejen's system was not described.

The original idea behind the present work and \cite{M} was that a natural Poisson-Lie
analogue of the Hamiltonian reduction treatment \cite{FP} of the $\BC_n$
Sutherland system should lead to a deformation of this system. It was expected that a
special case of van Diejen's standard 5-coupling deformation will arise. The
expectation has now been confirmed, although it came as a surprise that a singular
limit is involved in the connection.

The outline of the paper is as follows. We start in Section \ref{sec:2} by defining
the reduction of interest. In Section \ref{sec:3} we observe that several technical
results of \cite{FK1} can be applied for analyzing the reduction at hand, and solve
the momentum map constraints by taking advantage of this observation. The heart of
the paper is Section \ref{sec:4}, where we characterize the reduced system.
In Subsection \ref{subsec:4.1} we prove that the reduced phase space is smooth, as
formulated in Theorem \ref{thm:4.4}. Then in Subsection \ref{subsec:4.2} we focus on a dense
open submanifold on which the Hamiltonian \eqref{I1} lives. The demonstration of the
Liouville integrability of the reduced free flows is given in Subsection
\ref{subsec:4.3}. In particular, we prove the integrability of the completion of the
system \eqref{I1} carried by the full reduced phase space. Our main result is Theorem
\ref{thm:4.9} (proved in Subsection \ref{subsec:4.4}), which establishes a globally valid model
of the reduced phase space. We stress that the global structure of the phase space on
which the flow of \eqref{I1} is complete was not considered previously at all, and
will be clarified as a result of our group theoretic interpretation. Section
\ref{sec:5} contains our conclusions, further comments on the related paper by
Marshall \cite{M} and a discussion of open problems. The main text is complemented by
four appendices. Appendix \ref{sec:A} deals with the connection to van Diejen's system;
the other 3 appendices contain important details relegated from the main text.

\section{Definition of the Hamiltonian reduction}
\label{sec:2}

We below introduce the `free' Hamiltonians and define their reduction. We restrict
the presentation of this background material to a minimum necessary for
understanding our work. The conventions follow \cite{FK1}, which also contains
more details. As a general reference, we recommend \cite{CP}.

\subsection{The unreduced free Hamiltonians}
\label{subsec:2.1}

We fix a natural number\footnote{The $n=1$ case would need special treatment
and is excluded in order to simplify the presentation.} $n\geq 2$ and
consider the Lie group $\SU(2n)$ equipped with its standard quadratic Poisson bracket
defined by the compact form of the Drinfeld-Jimbo classical $r$-matrix,
\be
r_{\mathrm{DJ}}=\ri\sum_{1\leq\alpha<\beta\leq 2n}E_{\alpha\beta}\wedge E_{\beta\alpha},
\label{T1}
\ee
where $E_{\alpha\beta}$ is the elementary matrix of size $2n$ having a single
non-zero entry $1$ at the $\alpha\beta$ position. In particular, the Poisson
brackets of the matrix elements of $g\in\SU(2n)$ obey Sklyanin's formula
\be
\{g\stackrel{\otimes}{,}g\}_{\SU(2n)}=[g\otimes g, r_{\mathrm{DJ}}].
\label{T2}
\ee
Thus $\SU(2n)$ becomes a Poisson-Lie group, i.e., the multiplication
$\SU(2n)\times\SU(2n)\to\SU(2n)$ is a Poisson map. The cotangent bundle $T^\ast\SU(2n)$
possesses a natural Poisson-Lie analogue, the so-called Heisenberg double \cite{STS},
which is provided by the real Lie group $\SL(2n,\C)$ endowed with a certain
symplectic form \cite{AM},
$\omega$. To describe $\omega$, we use the Iwasawa decomposition and factorize every
element $K\in\SL(2n,\C)$ in two alternative ways
\be
K=g_Lb_R^{-1}=b_Lg_R^{-1}
\label{T3}
\ee
with uniquely determined
\be
g_L,g_R\in\SU(2n),\quad
b_L,b_R\in\SB(2n).
\label{T4}
\ee
Here $\SB(2n)$ stands for the subgroup of $\SL(2n,\C)$ consisting of upper triangular
matrices with positive diagonal entries. The symplectic form $\omega$ reads
\be
\omega=\frac{1}{2}\Im\tr(db_Lb_L^{-1}\wedge dg_Lg_L^{-1})+
\frac{1}{2}\Im\tr(db_Rb_R^{-1}\wedge dg_Rg_R^{-1}).
\label{T5}
\ee
Before specifying free Hamiltonians on the phase space $\SL(2n,\C)$, note that any
smooth function $h$ on $\SB(2n)$ corresponds to a function $\tilde h$ on the space of
positive definite Hermitian matrices of determinant $1$ by the relation
\be
\tilde h(bb^\dagger)=h(b),\quad\forall b\in\SB(2n).
\label{T6}
\ee
Then introduce the invariant functions
\be
C^\infty(\SB(2n))^{\SU(2n)}\equiv\{h\in C^\infty(\SB(2n))\mid\tilde h(bb^\dagger)
=\tilde h(gbb^\dagger g^{-1}),\ \forall g\in\SU(2n),b\in\SB(2n)\}.
\label{T7}
\ee
These in turn give rise to the following ring of functions on $\SL(2n,\C)$:
\be
\fH\equiv\{\cH\in C^\infty(\SL(2n,\C))\mid\cH(g_Lb_R^{-1})=h(b_R),\
h\in C^{\infty}(\SB(2n))^{\SU(2n)}\},
\label{T8}
\ee
where we utilized the decomposition \eqref{T3}. An important point is that $\fH$
forms an Abelian algebra with respect to the Poisson bracket associated with $\omega$
\eqref{T5}.

The flows of the `free' Hamiltonians contained in $\fH$ can be obtained effortlessly.
To describe the result, define the derivative $d^Rf\in C^\infty(\SB(2n),\su(2n))$ of any
real function $f\in C^\infty(\SB(2n))$ by requiring
\be
\ds f(be^{sX})=\Im\tr\big(Xd^Rf(b)\big),
\quad\forall b\in\SB(2n),\ \forall X\in\mathrm{Lie}(\SB(2n)).
\label{T9}
\ee
The Hamiltonian flow generated by $\cH\in\fH$ through the initial value
$K(0)=g_L(0)b_R(0)^{-1}$ is in fact given by
\be
K(t)=g_L(0)\exp\big[-td^Rh(b_R(0))\big]b_R^{-1}(0),
\label{T10}
\ee
where $\cH$ and $h$ are related according to \eqref{T8}. This means that $g_L(t)$
follows the orbit of a one-parameter subgroup, while $b_R(t)$ remains constant.
Actually, $g_R(t)$ also varies along a similar orbit, and $b_L(t)$ is constant.

The constants of motion $b_L$ and $b_R$ generate a Poisson-Lie symmetry, which
allows one to define Marsden-Weinstein type \cite{MW} reductions.

\subsection{Generalized Marsden-Weinstein reduction}
\label{subsec:2.2}

The free Hamiltonians in $\fH$ are invariant with respect to the action of
$\SU(2n)\times\SU(2n)$ on $\SL(2n,\C)$ given by left- and right-multiplications.
This is a Poisson-Lie symmetry, which means that the corresponding action map
\be
\SU(2n)\times\SU(2n)\times\SL(2n,\C)\to\SL(2n,C),
\label{T11}
\ee
operating as
\be
(\eta_L,\eta_R,K)\mapsto\eta_L K\eta_R^{-1},
\label{T12}
\ee
is a Poisson map. In \eqref{T11} the product Poisson structure is taken using the
Sklyanin bracket on $\SU(2n)$ and the Poisson structure on $\SL(2n,\C)$ associated with
the symplectic form $\omega$ \eqref{T5}. This Poisson-Lie symmetry admits a momentum
map in the sense of Lu \cite{Lu}, given explicitly by
\be
\Phi\colon\SL(2n,\C)\to\SB(2n)\times\SB(2n),\quad\Phi(K)=(b_L,b_R).
\label{T13}
\ee
The key property of the momentum map is represented by the identity
\be
\ds f(e^{sX}Ke^{-sY})=\Im\tr\big(X\{f,b_L\}b_L^{-1}+Y\{f,b_R\}b_R^{-1}\big),
\quad\forall X,Y\in\su(2n),
\label{T14}
\ee
where $f\in C^\infty(\SL(2n,\C))$ is an arbitrary real function and the Poisson bracket
is the one corresponding to $\omega$ \eqref{T5}. The map $\Phi$ enjoys an
equivariance property and one can \cite{Lu} perform Marsden-Weinstein type reduction in the
same way as for usual Hamiltonian actions (for which the symmetry group has vanishing
Poisson structure). To put it in a nutshell, any $\cH\in\fH$ gives rise to a reduced
Hamiltonian system by fixing the value of $\Phi$ and subsequently  taking quotient
with respect to the
corresponding isotropy group. The reduced flows can be obtained by the standard
restriction-projection algorithm, and under favorable circumstances the reduced phase
space is a smooth symplectic manifold.

Now, consider the block-diagonal subgroup
\be
G_+:=\mathrm{S}(\UN(n)\times\UN(n))<\SU(2n).
\label{T15}
\ee
Since $G_+$ is also a Poisson submanifold of $\SU(2n)$, the restriction of \eqref{T12}
yields a Poisson-Lie action
\be
G_+\times G_+\times\SL(2n,\C)\to\SL(2n,\C)
\label{T16}
\ee
of $G_+\times G_+$. The momentum map for this action is provided by projecting the
original momentum map $\Phi$ as follows. Let us write every element $b\in\SB(2n)$ in
the block-form
\be
b=\begin{bmatrix}b(1)&b(12)\\\0_n&b(2)\end{bmatrix}
\label{T17}
\ee
and define $G_+^\ast<\SB(2n)$ to be the subgroup for which $b(12)=\0_n$.
If $\pi\colon\SB(2n)\to G_+^\ast$ denotes the projection
\be
\pi\colon\begin{bmatrix}b(1)&b(12)\\\0_n&b(2)\end{bmatrix}\mapsto
\begin{bmatrix}b(1)&\0_n\\\0_n&b(2)\end{bmatrix},
\label{T18}
\ee
then the momentum map $\Phi_+\colon\SL(2n,\C)\to G_+^\ast\times G_+^\ast$ is furnished
by
\be
\Phi_+(K)=(\pi(b_L),\pi(b_R)).
\label{T19}
\ee
Indeed, it is readily checked that the analogue of \eqref{T14} holds with $X$, $Y$
taken from the block-diagonal subalgebra of $\su(2n)$ and $b_L$, $b_R$ replaced by
their projections. The equivariance property of this momentum map means that
in correspondence to
\be
K\mapsto\eta_L K\eta_R^{-1}\quad\text{with}\quad(\eta_L,\eta_R)\in G_+\times G_+,
\label{T20}
\ee
one has
\be
\big(\pi(b_L)\pi(b_L)^\dagger,\pi(b_R)\pi(b_R)^\dagger\big)\mapsto
\big(\eta_L\pi(b_L)\pi(b_L)^\dagger\eta_L^{-1},
\eta_R\pi(b_R)\pi(b_R)^\dagger \eta_R^{-1}\big).
\label{T21}
\ee
We briefly mention here that, as the notation suggests, $G_+^\ast$ is itself a
Poisson-Lie group that can serve as a Poisson dual of $G_+$. The relevant Poisson
structure can be obtained by identifying the block-diagonal subgroup of $\SB(2n)$ with
the factor group $\SB(2n)/L$, where $L$ is the block-upper-triangular normal subgroup.
This factor group inherits a Poisson structure from $\SB(2n)$, since $L$ is a so-called
coisotropic (or `admissible') subgroup of $\SB(2n)$ equipped with its standard Poisson
structure. The projected momentum map $\Phi_+$ is a Poisson map with respect to this
Poisson structure on the two factors $G_+^\ast$ in \eqref{T19}. The details are not
indispensable for us. The interested reader may find them e.g. in \cite{BCST}.

Inspired by the papers \cite{FP,FK1,M}, we wish to study the particular
Marsden-Weinstein reduction defined by imposing the following momentum map constraint:
\be
\Phi_+(K)=\mu\equiv(\mu_L,\mu_R),\quad\text{where}\quad
\mu_L=\begin{bmatrix}e^u\nu(x)&\0_n\\\0_n&e^{-u}\1_n\end{bmatrix},\quad
\mu_R=\begin{bmatrix}e^v\1_n&\0_n\\\0_n&e^{-v}\1_n\end{bmatrix}
\label{T22}
\ee
with some real constants $u$, $v$ and $x$. Here, $\nu(x)\in\SB(n)$ is the $n\times n$
upper triangular matrix defined by
\be
\nu(x)_{jj}=1,\quad\nu(x)_{jk}=(1-e^{-x})e^{\frac{(k-j)x}{2}},\quad j<k,
\label{T23}
\ee
whose main property is that $\nu(x)\nu(x)^\dag$ has the largest possible
non-trivial isotropy group under conjugation by the elements of $\SU(n)$.

Our principal task is to characterize the reduced phase space
\be
M\equiv\Phi_+^{-1}(\mu)/G_\mu,
\label{T24}
\ee
where $\Phi_+^{-1}(\mu)=\{K\in\SL(2n,\C)\mid\Phi_+(K)=\mu\}$
and
\be
G_\mu=G_+(\mu_L)\times G_+
\label{T25}
\ee
is the isotropy group of $\mu$ inside $G_+\times G_+$. Concretely, $G_+(\mu_L)$ is
the subgroup of $G_+$ consisting of the special unitary matrices of the form
\be
\eta_L=\begin{bmatrix}\eta_L(1)&\0_n\\\0_n&\eta_L(2)\end{bmatrix},
\label{T26}
\ee
where $\eta_L(2)$ is arbitrary and
\be
\eta_L(1)\nu(x)\nu(x)^\dag\eta_L(1)^{-1}=\nu(x)\nu(x)^\dag.
\label{T27}
\ee
In words, $\eta_L(1)$ belongs to  the little group of $\nu(x)\nu(x)^\dag$ in $\UN(n)$.
We shall see that $\Phi_+^{-1}(\mu)$ and $M$ are smooth manifolds for which the
canonical projection
\be
\pi_\mu\colon\Phi_+^{-1}(\mu)\to M
\label{T28}
\ee
is a smooth submersion. Then $M$ \eqref{T24} inherits a symplectic form
$\omega_M$ from $\omega$ \eqref{T5}, which satisfies
\be
\iota_\mu^\ast(\omega)=\pi_\mu^\ast(\omega_M),
\label{T29}
\ee
where $\iota_\mu\colon\Phi_+^{-1}(\mu)\to\SL(2n,\C)$ denotes the tautological
embedding.

\section{Solution of the momentum map constraints}
\label{sec:3}

The description of the reduced phase space requires us to solve the momentum map
constraints, i.e., we have to find all elements $K\in\Phi_+^{-1}(\mu)$. Of course,
it is enough to do this up to the gauge transformations provided by the isotropy
group $G_\mu$ \eqref{T25}. The solution of this problem will rely on the auxiliary
equation \eqref{S11} below, which is essentially equivalent to the momentum map
constraint, $\Phi_+(K)=\mu$, and coincides with an equation studied previously in
great detail in \cite{FK1}. Thus we start in the next subsection by deriving this
equation.

\subsection{A crucial equation implied by the constraints}
\label{subsec:3.1}

We begin by recalling (e.g.~\cite{Mat}) that any $g\in\SU(2n)$ can be decomposed as
\be
g=g_+\begin{bmatrix}\cos q&\ri\sin q\\\ri\sin q&\cos q\end{bmatrix}h_+,
\label{S1}
\ee
where $g_+,h_+\in G_+$ and $q=\diag(q_1,\ldots,q_n)\in\R^n$ satisfies
\be
\frac{\pi}{2}\geq q_1\geq\dots\geq q_n\geq 0.
\label{S2}
\ee
The vector $q$ is uniquely determined by $g$, while $g_+$ and $h_+$ suffer from
controlled ambiguities.

First, apply the above decomposition to $g_L$ in $K=g_Lb_R^{-1}\in\Phi_+^{-1}(\mu)$
and use the right-handed momentum constraint $\pi(b_R)=\mu_R$. It is then easily seen
that up to gauge transformations every element of $\Phi_+^{-1}(\mu)$ can be
represented in the following form:
\be
K=\begin{bmatrix}\rho&\0_n\\\0_n&\1_n\end{bmatrix}
\begin{bmatrix}\cos q&\ri\sin q\\\ri\sin q&\cos q\end{bmatrix}
\begin{bmatrix}e^{-v}\1_n&\alpha\\\0_n&e^v\1_n\end{bmatrix}.
\label{S3}
\ee
Here $\rho\in\SU(n)$ and $\alpha$ is an $n\times n$ complex matrix. By using obvious
block-matrix notation, we introduce $\Omega:=K_{22}$ and record from \eqref{S3} that
\be
\Omega=\ri(\sin q)\alpha+e^v\cos q.
\label{S4}
\ee
For later purpose we introduce also the polar decomposition of the matrix $\Omega$,
\be
\Omega=\Lambda T,
\label{S5}
\ee
where $T\in\UN(n)$ and the Hermitian, positive semi-definite factor $\Lambda$ is
uniquely determined by the relation $\Omega\Omega^\dag=\Lambda^2$.

Second, by writing $K=b_Lg_R^{-1}$ the left-handed momentum constraint
$\pi(b_L) =\mu_L$ tells us that $b_L$ has the block-form
\be
b_L=\begin{bmatrix}e^u\nu(x)& \chi\\\0_n&e^{-u}\1_n\end{bmatrix}
\label{S6}
\ee
with an $n\times n$ matrix $\chi$. Now we inspect the components of the $2\times 2$
block-matrix identity
\be
K K^\dag=b_Lb_L^\dag,
\label{S7}
\ee
which results by substituting $K$ from \eqref{S3}. We find that the (22) component of
this identity is equivalent to
\be
\Omega\Omega^\dag=\Lambda^2=e^{-2u}\1_n-e^{-2v}(\sin q)^2.
\label{S8}
\ee
On account of the condition \eqref{I2}, this uniquely determines $\Lambda$ in terms
of $q$, and shows also that $\Lambda$ is invertible. A further important consequence
is that we must have
\be
q_n>0,
\label{S9}
\ee
and therefore $\sin q$ is an invertible diagonal matrix. Indeed, if $q_n=0$, then
from \eqref{S4} and \eqref{S8} we would get
$(\Omega\Omega^\dag)_{nn}=e^{2v}=e^{-2u}$, which is excluded by \eqref{I2}.

Next, one can check that in the presence of the relations already established, the
(12) and the (21) components of the identity \eqref{S7} are equivalent to the equation
\be
\chi=\rho(\ri\sin q)^{-1}[e^{-u}\cos q-e^{u+v}\Omega^\dag].
\label{S10}
\ee
Observe that $K$ uniquely determines $q$, $T$ and $\rho$, and conversely $K$ is
uniquely defined by the above relations once $q$, $T$ and $\rho$ are found.

Now one can straightforwardly check by using the above relations that the (11)
component of the identity \eqref{S7} translates into the following equation:
\be
\rho(\sin q)^{-1}T^\dag(\sin q)^2T(\sin q)^{-1}\rho^\dag=\nu(x)\nu(x)^\dag.
\label{S11}
\ee
This is to be satisfied by $q$ subject to \eqref{S2}, \eqref{S9} and $T\in\UN(n)$,
$\rho\in\SU(n)$. What makes our job relatively easy is that this is the same as
equation (5.7) in the paper \cite{FK1} by Klim\v c\'ik and one of us. In fact,
this equation
was analyzed in  detail in \cite{FK1}, since it played a crucial role in that work,
too. The correspondence with the symbols used in \cite{FK1} is
\be
(\rho,T,\sin q)\Longleftrightarrow(k_L,k_R^\dag,e^{\hat p}).
\label{S12}
\ee
This motivates to introduce the variable $\hat p\in\R^n$ in our case, by setting
\be
\sin q_k=e^{\hat p_k},\quad k=1,\dots,n.
\label{S13}
\ee
Notice from \eqref{S2} and \eqref{S9} that we have
\be
0\geq\hat p_1\geq\dots\geq\hat p_n>-\infty.
\label{S14}
\ee
If the components of $\hat p$ are all different, then we can directly rely on \cite{FK1} to
establish both the allowed range of $\hat p$ and the explicit form of $\rho$ and $T$.
The statement that $\hat p_j\neq\hat p_k$ holds for $j\neq k$ can be proved by
adopting arguments given in \cite{FK1,FK2}. This proof requires combining techniques of
\cite{FK1} and \cite{FK2}, whose extraction from \cite{FK1,FK2} is rather involved.
 We present it in Appendix \ref{sec:B}, otherwise in the next subsection we proceed by simply
stating relevant applications of results from \cite{FK1}.

\begin{remark}\label{rem:3.1}
In the context of \cite{FK1} the components of $\hat p$ are not restricted to the
half-line and both $k_L$ and $k_R$ vary in $\UN(n)$. These slight differences do
not pose any obstacle to using the results and techniques of \cite{FK1,FK2}. We note that
essentially the same equation \eqref{S11} surfaced in \cite{M} as well, but the author of
that paper refrained from taking advantage of the previous analyses of this equation.
In fact, some statements of \cite{M} are not fully correct. This will be specified (and
corrected) in Section \ref{sec:5}.
\end{remark}

\subsection{Consequences of equation \eqref{S11}}
\label{subsec:3.2}

We start by pointing out the foundation of the whole analysis. For this, we
first display the identity
\be
\nu(x)\nu(x)^\dag=e^{-x}\1_n+\sgn(x)\hat v\hat v^\dag,
\label{S15}
\ee
which holds with a certain $n$-component vector $\hat v=\hat v(x)$. By introducing
\be
w=\rho^\dag\hat v
\label{S16}
\ee
and setting $\hat p\equiv\diag(\hat p_1,\dots,\hat p_n)$, we rewrite equation
\eqref{S11} as
\be
e^{2\hat p-x\1_n}+\sgn(x)e^{\hat p}ww^\dag e^{\hat p}=T^{-1}e^{2\hat p}T.
\label{S17}
\ee
The equality of the characteristic polynomials of the matrices on the two sides of
\eqref{S17} gives a polynomial equation that contains $\hat p$, the absolute values
$|w_j|^2$ and a complex indeterminate. Utilizing the requirement that $|w_j|^2\geq 0$
must hold, one obtains the following result.

\begin{proposition}\label{prop:3.2}
If $K$ given by \eqref{S3} belongs to the constraint
surface $\Phi_+^{-1}(\mu)$, then
the vector $\hat p$ \eqref{S13} is contained in the closed polyhedron
\be
\bar\cC_x:=\{\hat p\in\R^n\mid 0\geq\hat p_1,\
\hat p_k-\hat p_{k+1}\geq|x|/2\ (k=1,\dots,n-1)\}.
\label{S18}
\ee
\end{proposition}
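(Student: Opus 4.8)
The plan is to analyze equation \eqref{S17} by comparing the characteristic polynomials of its two sides, exactly as indicated in the text. First I would note that the right-hand side $T^{-1}e^{2\hat p}T$ is conjugate to the diagonal matrix $e^{2\hat p}$, so its characteristic polynomial is $\prod_{k=1}^n(\lambda-e^{2\hat p_k})$, a polynomial whose roots are known once $\hat p$ is fixed. The left-hand side is a rank-one perturbation of the diagonal matrix $e^{2\hat p-x\1_n}$: writing $D:=e^{2\hat p-x\1_n}$ and $y:=e^{\hat p}w$, the matrix is $D+\sgn(x)yy^\dag$. For such a matrix there is the standard determinant formula
\be
\det(\lambda\1_n-D-\sgn(x)yy^\dag)
=\Big(\prod_{k=1}^n(\lambda-e^{2\hat p_k-x})\Big)
\Big(1-\sgn(x)\sum_{k=1}^n\frac{|y_k|^2}{\lambda-e^{2\hat p_k-x}}\Big),
\ee
valid for $\lambda$ away from the poles (and then everywhere by polynomiality). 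Equating this with $\prod_k(\lambda-e^{2\hat p_k})$ gives one scalar polynomial identity in $\lambda$ whose coefficients encode $\hat p$ and the nonnegative numbers $|y_k|^2=e^{2\hat p_k}|w_k|^2$.

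Next I would extract the constraints from this identity. Evaluating (or rather reading off residues) at $\lambda=e^{2\hat p_j-x}$ in the $x>0$ case, or arguing via the sign of the rational function on suitable intervals, one obtains an explicit product formula for each $|y_j|^2$ of the shape
\be
\sgn(x)\,|y_j|^2=\frac{\prod_{k=1}^n\big(e^{2\hat p_j-x}-e^{2\hat p_k}\big)}
{\prod_{k\neq j}\big(e^{2\hat p_j-x}-e^{2\hat p_k-x}\big)}
=e^{-x}\cdot\frac{\prod_{k=1}^n\big(e^{2\hat p_j-x}-e^{2\hat p_k}\big)}
{\prod_{k\neq j}\big(e^{2\hat p_j}-e^{2\hat p_k}\big)},
\ee
and the requirement $|y_j|^2\ge 0$ then becomes a sign condition on the right-hand side. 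Since $\hat p$ already satisfies the ordering \eqref{S14}, one checks that these sign conditions are equivalent to the separation inequalities $\hat p_k-\hat p_{k+1}\ge |x|/2$ together with $\hat p_1\le 0$, i.e.\ to membership in $\bar\cC_x$. Concretely, for $x>0$ and $j\ge 2$ the factor $\big(e^{2\hat p_j-x}-e^{2\hat p_{j-1}}\big)$ in the numerator is the one that forces $e^{2\hat p_j-x}\le e^{2\hat p_{j-1}}$, hence $\hat p_{j-1}-\hat p_j\ge x/2$; a symmetric bookkeeping handles $x<0$ via $\sgn(x)$; and the $\hat p_1\le 0$ constraint is already recorded in \eqref{S14} (it also re-emerges from positivity of $|y_1|^2$ when one tracks the top factor). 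I would present the sign analysis uniformly by reducing to the case $x>0$ and noting that replacing $x$ by $-x$ in \eqref{S17} corresponds to a relabelling, or simply carry the $\sgn(x)$ through.

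I expect the main obstacle to be the sign bookkeeping in the product formula: one must verify that, given the ordering \eqref{S14}, the nonnegativity of every $|y_j|^2$ is \emph{exactly} the list of inequalities defining $\bar\cC_x$, with no spurious extra constraints and none missing. This requires checking that each consecutive-gap inequality is forced by (and only by) the positivity of an appropriate single $|y_j|^2$, and that the non-consecutive factors in the numerator and the denominator factors have matching signs so they contribute nothing new. A clean way to organize this is to factor out the dominant term and write $\sgn(x)|y_j|^2$ as $e^{-x}\prod_{k\neq j}\frac{e^{2\hat p_j-x}-e^{2\hat p_k}}{e^{2\hat p_j}-e^{2\hat p_k}}$ times the $k=j$ numerator factor $\big(e^{2\hat p_j-x}-e^{2\hat p_j}\big)=e^{2\hat p_j}(e^{-x}-1)$, whose sign is $-\sgn(x)$, and then track the sign of each ratio $\frac{e^{2\hat p_j-x}-e^{2\hat p_k}}{e^{2\hat p_j}-e^{2\hat p_k}}$ according to whether $k<j$ or $k>j$. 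Since essentially this computation was carried out in \cite{FK1} for equation (5.7) there — which by \eqref{S12} is the same equation — I would, where possible, simply invoke the relevant lemma of \cite{FK1}, adapting only the bookkeeping caused by our extra restriction $\hat p_k\le 0$, and relegate any longer verification to an appendix as the authors do elsewhere.
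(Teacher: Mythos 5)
Your first step --- comparing characteristic polynomials of the two sides of \eqref{S17}, extracting $|w_j|^2$ (equivalently $|y_j|^2$) by evaluating at $\lambda=e^{2\hat p_j-x}$, and reading off the gap inequalities from the sign of the resulting product --- is exactly the first half of the paper's Appendix \ref{sec:B} argument, and your sign bookkeeping is correct \emph{under the assumption that the components of $\hat p$ are pairwise distinct}. The genuine gap is that you never establish this assumption. At this stage of the paper one only knows the weak ordering \eqref{S14}, i.e.\ $0\geq\hat p_1\geq\dots\geq\hat p_n$; if two components coincide, your residue formula divides by $e^{2\hat p_j}-e^{2\hat p_k}=0$ and the whole extraction breaks down. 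Moreover, since the conclusion $\hat p\in\bar\cC_x$ entails the strict separations $\hat p_k-\hat p_{k+1}\geq|x|/2>0$, ruling out coincidences is not a technicality one can wave away --- it is an essential part of what is being claimed. This is exactly why the main text says the proof ``requires combining techniques of \cite{FK1} and \cite{FK2}, whose extraction \dots is rather involved'': Lemma 5.2 of \cite{FK1} covers the non-degenerate case, but the exclusion of degenerate configurations needs the separate argument of Theorem 2 of \cite{FK2}.

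The bulk of the paper's proof is devoted precisely to this missing piece. It proceeds indirectly: assuming a degenerate configuration encoded by a partition $n=k_1+\dots+k_r$ with $r<n$, it rewrites the characteristic polynomial identity in terms of the distinct values $\Delta_j$ and the aggregated non-negative quantities $Z_m$, shows via the rational function $Q(\Delta,x,\lambda)$ in \eqref{B8} that all its poles must be apparent (which already constrains the $\Delta$'s), solves for the $Z_m$ in three separate cases, and then deforms the degenerate configuration to a nearby non-degenerate one $\hat p(\varepsilon)$ with gaps smaller than $|x|/2$. Applying the non-degenerate sign analysis to $\hat p(\varepsilon)$ forces some $|w_\ell(\hat p(\varepsilon),x)|^2<0$ along a sequence $\varepsilon_N\to 0$, while the limit must equal some $Z_{m^\ast}>0$; this contradiction excludes degeneracy. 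Your proposal would be complete if you added an argument of this kind (or an equivalent one); as written, invoking the ``relevant lemma of \cite{FK1}'' does not suffice, because that lemma alone does not rule out coinciding eigenvalues in the present setting.
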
\noindent
Proposition \ref{prop:3.2} can be proved by merging the proofs of Lemma 5.2
of \cite{FK1} and Theorem 2 of \cite{FK2}. This is presented in Appendix \ref{sec:B}.

The above-mentioned polynomial equality permits to find the possible vectors $w$ \eqref{S16}
as well. If $\hat p$ and $w$ are given, then $T$ is determined by equation \eqref{S17}
up to left-multiplication by a diagonal matrix and $\rho$ is determined by
\eqref{S16} up to left-multiplication by elements from the little group of
$\hat v(x)$. Following this line of reasoning and controlling the ambiguities in the
same way as in \cite{FK1}, one can find the explicit form of the most general $\rho$ and
$T$ at \emph{any} fixed $\hat p\in\bar\cC_x$. In particular, it turns out that the
range of the vector $\hat p$ equals $\bar \cC_x$.

Before presenting the result, we need to prepare some notations. First of all, we
pick an arbitrary $\hat p\in\bar\cC_x$ and define the $n\times n$ matrix
$\theta(x,\hat p)$ as follows:
\be
\theta(x,\hat p)_{jk}:=\frac{\sinh\big(\frac{x}{2}\big)}{\sinh(\hat p_k-\hat p_j)}
\prod_{\substack{m=1\\(m\neq j,k)}}^n\bigg[\frac{\sinh(\hat p_j-\hat p_m-\frac{x}{2})
\sinh(\hat p_k-\hat p_m+\frac{x}{2})}{\sinh(\hat p_j-\hat p_m)
\sinh(\hat p_k-\hat p_m)}\bigg]^{\tfrac{1}{2}},\quad j\neq k,
\label{S19}
\ee
and
\be
\theta(x,\hat p)_{jj}:=\prod_{\substack{m=1\\(m\neq j)}}^n
\bigg[\frac{\sinh(\hat p_j-\hat p_m-\frac{x}{2})\sinh(\hat p_j-\hat p_m+
\frac{x}{2})}{\sinh^2(\hat p_j-\hat p_m)}\bigg]^{\tfrac{1}{2}}.
\label{S20}
\ee
All expressions under square root are non-negative and non-negative square roots are
taken. Note that $\theta(x,\hat p)$ is a real orthogonal matrix of determinant 1 for
which $\theta(x,\hat p)^{-1}=\theta(-x,\hat p)$ holds, too.

Next, define the real vector $r(x,\hat p)\in\R^n$ with non-negative components
\be
r(x,\hat p)_j=\sqrt{\frac{1-e^{-x}}{1-e^{-nx}}}\prod_{\substack{k=1\\(k\neq j)}}^n
\sqrt{\frac{1-e^{2\hat p_j-2\hat p_k-x}}{1-e^{2\hat p_j-2\hat p_k}}},
\quad j=1,\dots,n,
\label{S21}
\ee
and the real $n\times n$ matrix $\zeta(x,\hat p)$,
\be
\begin{split}
&\zeta(x,\hat p)_{aa}=r(x,\hat p)_a,\quad
\zeta(x,\hat p)_{ij}=\delta_{ij}-\frac{r(x,\hat p)_ir(x,\hat p)_j}{1+r(x,\hat p)_a},\\
&\zeta(x,\hat p)_{ia}=-\zeta(x,\hat p)_{ai}=r(x,\hat p)_i,\quad i,j\neq a,
\end{split}
\label{S22}
\ee
where $a=n$ if $x>0$ and $a=1$ if $x<0$. Introduce also the vector $v=v(x)$:
\be
v(x)_j=\sqrt{\frac{n(e^x-1)}{1-e^{-nx}}}e^{-\tfrac{jx}{2}},\quad j=1,\ldots,n,
\label{S23}
\ee
which is related to $\hat v$ in \eqref{S15} by
\be
\hat v(x)=\sqrt{\sgn(x) e^{-x}\frac{e^{nx}-1}{n}}v(x).
\label{S24}
\ee
Finally, define the $n\times n$ matrix $\kappa(x)$ as
\be
\begin{gathered}
\kappa(x)_{aa}=\frac{v(x)_a}{\sqrt{n}},\quad
\kappa(x)_{ij}=\delta_{ij}-\frac{v(x)_iv(x)_j}{n+\sqrt{n}v(x)_a},\\
\kappa(x)_{ia}=-\kappa(x)_{ai}=\frac{v(x)_i}{\sqrt{n}},\quad i,j\neq a,
\end{gathered}
\label{S25}
\ee
where, again, $a=n$ if $x>0$ and $a=1$ if $x<0$. It can be shown that both
$\kappa(x)$ and $\zeta(x,\hat p)$ are orthogonal matrices of
determinant 1 for any $\hat p\in\bar\cC_x$.

Now we can state the main result of this section, whose proof is omitted since it is
a direct application of the analysis of the solutions of \eqref{S11} presented in
Section 5 of \cite{FK1}.

\begin{proposition}\label{prop:3.3}
Take any $\hat p\in\bar\cC_x$ and any diagonal unitary matrix $e^{\ri\hat q}\in\T_n$.
By using the preceding notations define $K\in\SL(2n,\C)$ \eqref{S3} by setting
\be
T=e^{\ri\hat q}\theta(-x,\hat p),\quad\rho=\kappa(x)\zeta(x,\hat p)^{-1},
\label{S26}
\ee
and also applying the equations \eqref{S4}, \eqref{S5}, \eqref{S8} and \eqref{S13}.
Then the element $K$ belongs to the constraint surface $\Phi_+^{-1}(\mu)$, and every
orbit of the gauge group $G_\mu$ \eqref{T25} in $\Phi_+^{-1}(\mu)$ intersects the set
of elements $K$ just constructed.
\end{proposition}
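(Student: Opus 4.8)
The plan is to reduce both assertions to the single matrix equation \eqref{S11}, and then to invoke the analysis of its solutions carried out in Section~5 of \cite{FK1}, which transfers verbatim through the dictionary \eqref{S12} (cf.\ Remark~\ref{rem:3.1}). For the first claim, observe that the $K$ built from \eqref{S3} together with \eqref{S4}, \eqref{S5}, \eqref{S8}, \eqref{S13} automatically lies in $\SL(2n,\C)$ (each of the three factors in \eqref{S3} has determinant~$1$), and that in the Iwasawa factorization $K=g_Lb_R^{-1}$ the relation $g_L\in\SU(2n)$ forces $b_R^{-1}$ to coincide with the rightmost factor of \eqref{S3}, whose block-diagonal part is $\diag(e^{-v}\1_n,e^v\1_n)=\mu_R^{-1}$; hence $\pi(b_R)=\mu_R$ is automatic. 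Thus only $\pi(b_L)=\mu_L$ remains, i.e.\ that the $\SB(2n)$-factor $b_L$ in $K=b_Lg_R^{-1}$, which satisfies $b_Lb_L^\dagger=KK^\dagger$, has the block shape \eqref{S6}. Examining the $2\times2$ blocks of $KK^\dagger$ exactly as in Subsection~\ref{subsec:3.1}: the $(2,2)$ block equals $e^{-2u}\1_n$ by the very definitions of $\Lambda$ and $\Omega=\Lambda T$, so the lower-right block of $b_L$ is $e^{-u}\1_n$; the $(1,2)$ block merely reproduces formula \eqref{S10} for the off-diagonal block of $b_L$; and the $(1,1)$ block forces the upper-left block of $b_L$ to be $e^u\nu(x)$ precisely when \eqref{S11} holds. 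Therefore $K\in\Phi_+^{-1}(\mu)$ if and only if the prescribed $\rho$ and $T$ solve \eqref{S11}. Rewriting \eqref{S11} as \eqref{S17} by means of \eqref{S15}--\eqref{S16} and inserting $T=e^{\ri\hat q}\theta(-x,\hat p)$, the diagonal unitary $e^{\ri\hat q}$ commutes with $e^{2\hat p}$ and cancels, and with $\theta(-x,\hat p)^{-1}=\theta(x,\hat p)$ equation \eqref{S17} becomes the assertion that the columns of $\theta(x,\hat p)$ are eigenvectors, with eigenvalues $e^{2\hat p_j}$, of the (sign-definite) rank-one perturbation $e^{2\hat p-x\1_n}+\sgn(x)e^{\hat p}ww^\dagger e^{\hat p}$ of the diagonal matrix $e^{2\hat p-x\1_n}$, where $w=\rho^\dagger\hat v(x)=\zeta(x,\hat p)\kappa(x)^{-1}\hat v(x)$ is a real vector. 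This is exactly the secular-equation computation of \cite[Sec.~5]{FK1}: for $\hat p\in\bar\cC_x$ these $|w_j|^2$ are nonnegative and are precisely those making the perturbed spectrum equal to $\{e^{2\hat p_j}\}_{j=1}^n$, and the resulting orthonormal eigenvector matrix is, by construction, $\theta(x,\hat p)$. Hence \eqref{S11} holds and $K\in\Phi_+^{-1}(\mu)$.

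For the second claim, recall from the discussion preceding \eqref{S3} that every $K'\in\Phi_+^{-1}(\mu)$ is $G_\mu$-equivalent to some $\tilde K$ of the form \eqref{S3}; such a $\tilde K$ is determined by $\hat p'$ (through \eqref{S13}), by $\rho'\in\SU(n)$, and by the unitary polar factor $T'$ of $\Omega'=\tilde K_{22}$, since $\Lambda'$, $\alpha'$ and the off-diagonal block of the associated $b_L$ are then fixed by \eqref{S8}, \eqref{S4}, \eqref{S10}. By Proposition~\ref{prop:3.2} we have $\hat p'\in\bar\cC_x$, and by the first part $(\rho',T')$ solves \eqref{S11}. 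Invoke now the classification of solutions of \eqref{S11} from \cite[Sec.~5]{FK1}: at a fixed $\hat p\in\bar\cC_x$ every solution is obtained from the distinguished one $(\kappa(x)\zeta(x,\hat p)^{-1},\theta(-x,\hat p))$ by the residual transformations $T\mapsto e^{\ri\hat q}T$ with $e^{\ri\hat q}$ diagonal unitary (which preserves \eqref{S11}/\eqref{S17} as it commutes with $e^{2\hat p}$) and $\rho\mapsto\sigma\rho$ with $\sigma$ in the isotropy group of $\nu(x)\nu(x)^\dagger$ inside $\SU(n)$ (which preserves \eqref{S11} since $\sigma\,\nu(x)\nu(x)^\dagger\,\sigma^\dagger=\nu(x)\nu(x)^\dagger$); a common right diagonal phase acting on both $\rho$ and $T$ is likewise allowed and can be absorbed into $\hat q$. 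Hence $T'=e^{\ri\hat q'}\theta(-x,\hat p')$ for some $e^{\ri\hat q'}\in\T_n$ and $\rho'=\sigma\kappa(x)\zeta(x,\hat p')^{-1}$ for such a $\sigma$. Since $e^{\ri\hat q'}$ is a free parameter of the construction, it only remains to remove $\sigma$: set $\eta_L:=\diag(\sigma^{-1},\1_n)$, which lies in $G_+(\mu_L)$ because $\sigma^{-1}$ stabilizes $\nu(x)\nu(x)^\dagger$, so $(\eta_L,\1_{2n})\in G_\mu$. As $\eta_L$ is block-diagonal with lower-right block $\1_n$, the gauge transformation $\tilde K\mapsto\eta_L\tilde K$ keeps the form \eqref{S3}, leaves $\hat p'$ and $\tilde K_{22}$ (hence $T'$ and $\alpha'$) unchanged, and replaces $\rho'$ by $\sigma^{-1}\rho'=\kappa(x)\zeta(x,\hat p')^{-1}$. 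Therefore $\eta_L\tilde K$ is precisely the element attached by \eqref{S26} and the accompanying equations to $\hat p'$ and $e^{\ri\hat q'}$, so the $G_\mu$-orbit of $K'$ meets the constructed set.

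The one step requiring real work---which I would carry out by transcribing the pertinent part of \cite[Sec.~5]{FK1}---is the explicit determination of the weights $|w_j|^2$ dictated by \eqref{S17}. That computation simultaneously shows these weights to be nonnegative exactly on $\bar\cC_x$ (the content of Proposition~\ref{prop:3.2}, proved in Appendix~\ref{sec:B}), fixes the orthogonal matrices $\theta$, $\zeta$ and $\kappa$, and displays the full list of residual ambiguities; one must then check that those ambiguities are exactly the $\hat q$-freedom together with the $G_\mu$-gauge freedom exploited above. The slight discrepancies with the setting of \cite{FK1} (there $k_L,k_R\in\UN(n)$ and $\hat p$ is unconstrained, here $\rho\in\SU(n)$ and $\hat p$ lies on the half-line) are harmless, as noted in Remark~\ref{rem:3.1}, since \eqref{S11} constrains $\rho$ only up to an overall phase.
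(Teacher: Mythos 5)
Your argument follows the paper's intended route: the proof of Proposition~\ref{prop:3.3} is omitted in the paper precisely because, once the momentum constraints have been reduced to equation \eqref{S11} as in Subsection~\ref{subsec:3.1}, the statement is a direct application of the classification of solutions of that equation in Section~5 of \cite{FK1} together with the gauge-fixing of the residual ambiguities --- which is exactly the reduction you spell out and the computation you defer. The only imprecision is the claim that the common right-diagonal phase ambiguity in $(\rho,T)$ ``can be absorbed into $\hat q$''; it is in fact removed by a residual gauge transformation of the block-diagonal type appearing in \eqref{S34}--\eqref{S35}, a point you yourself flag as needing verification in your final paragraph, so this does not affect the correctness of the argument.
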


\begin{remark}\label{rem:3.4}
It is worth spelling out the expression of the element $K$ given by
Proposition \ref{prop:3.3}. Indeed, we have
\be
K(\hat p, e^{\ri \hat q}) =\begin{bmatrix}\rho &\0_n\\\0_n&\1_n\end{bmatrix}
\begin{bmatrix}\sqrt{\1_n-e^{2\hat p}} &\ri e^{\hat p}\\\ri e^{\hat p}&
\sqrt{\1_n-e^{2\hat p}}\end{bmatrix}
\begin{bmatrix}e^{-v}\1_n&\alpha\\\0_n&e^v\1_n\end{bmatrix}
\label{S27}
\ee
using the above definitions and
\be
\alpha=-\ri\bigg[e^{\ri\hat q}\sqrt{e^{-2u}e^{-2\hat p}-e^{-2v}\1_n}\,\theta(-x,\hat p)
-e^v\sqrt{e^{-2\hat p}-\1_n}\bigg].
\label{S28}
\ee
\end{remark}

\begin{remark}\label{rem:3.5}
Let us call $S$ the set of the elements $K(\hat p,e^{\ri \hat q})$ constructed above,
and observe that this set is homeomorphic to
\be
\bar\cC_x\times\T_n=\{(\hat p,e^{\ri\hat q})\}
\label{S29}
\ee
by its very definition. This is not a smooth manifold, because of the presence of the
boundary of $\bar\cC_x$. However, this does not indicate any `trouble' since it is
not true (at the boundary of $\bar\cC_x$) that $S$ intersects every gauge orbit in
$\Phi_+^{-1}(\mu)$ in a \emph{single} point. Indeed, it is instructive to verify that
if $\hat p$ is the special vertex of $\bar\cC_x$ for which
$\hat p_k=(1-k)|x|/2$ for $k=1,\dots,n$, then all points $K(\hat p,e^{\ri \hat q})$
lie on a single gauge orbit. This, and further inspection, can lead to the idea that
the variables $\hat q_j$ should be identified with arguments of complex numbers,
which lose their meaning at the origin that should correspond to the boundary of
$\bar\cC_x$. Our Theorem \ref{thm:4.9} will show that this idea is correct. It is proper to
stress that we arrived at such idea under the supporting influence of previous works
\cite{RIMS95,FK1}.
\end{remark}

\section{Characterization of the reduced system}
\label{sec:4}

The smoothness of the reduced phase space and the completeness of the reduced free
flows follows immediately if we can show that the gauge group $G_\mu$ acts in such a
way on $\Phi_+^{-1}(\mu)$ that the isotropy group of every point is just the finite
center of the symmetry group.
In Subsection \ref{subsec:4.1}, we prove that the factor of $G_\mu$ by the center
acts freely on $\Phi_+^{-1}(\mu)$. Then in Subsection \ref{subsec:4.2} we explain
that $\cC_x\times\T_n$ provides a model of a dense open subset of the reduced phase
space by means of the corresponding subset of $\Phi_+^{-1}(\mu)$ defined by
Proposition \ref{prop:3.3}. Adopting a key calculation from \cite{M}, it turns out that
$(\hat p,e^{\ri\hat q})\in\cC_x\times\T_n$ are Darboux coordinates on this dense open
subset. In Subsection \ref{subsec:4.3}, we demonstrate that the reduction
of the Abelian Poisson algebra of free Hamiltonians \eqref{T8} yields an integrable
system. Finally, in Subsection \ref{subsec:4.4}, we present a model of the full reduced phase
space, which is our main result.

\subsection{Smoothness of the reduced phase space}
\label{subsec:4.1}

It is clear that the normal subgroup of the full symmetry group $G_+\times G_+$
consisting of matrices of the form
\be
(\eta,\eta)\quad\text{with}\quad\eta=\diag(z\1_n,z\1_n),\quad z^{2n}=1
\label{S30}
\ee
acts trivially on the phase space. This subgroup is contained in $G_\mu$ \eqref{T25}.
The corresponding factor group of $G_\mu$ is called `effective gauge group' and is
denoted by $\bar G_\mu$. We wish to show that $\bar G_\mu$ acts freely on the
constraint surface $\Phi_+^{-1}(\mu)$.

We need the following elementary lemmas.

\begin{lemma}\label{lem:4.1}
Suppose that
\be
g_+\begin{bmatrix}\cos q&\ri\sin q\\\ri\sin q&\cos q\end{bmatrix}h_+
=g_+'\begin{bmatrix}\cos q&\ri\sin q\\\ri\sin q&\cos q\end{bmatrix}h_+'
\label{S31}
\ee
with $g_+,h_+,g_+',h_+'\in G_+$ and $q=\diag(q_1,\dots,q_n)$ subject to
\be
\frac{\pi}{2}\geq q_1>\dots>q_n>0.
\label{S32}
\ee
Then there exist diagonal matrices $m_1,m_2\in\T_n$ having the form
\be
m_1=\diag(a,\xi),\quad m_2=\diag(b,\xi),
\quad\xi\in\T_{n-1},\ a,b\in\T_1,\quad\det(m_1m_2)=1,
\label{*}
\ee
for which
\be
(g_+',h_+')=(g_+\diag(m_1,m_2),\diag(m_2^{-1},m_1^{-1})h_+).
\label{S33}
\ee
If \eqref{S32} holds with strict inequality $\frac{\pi}{2}>q_1$, then $m_1=m_2$,
i.e., $a=b$.
\end{lemma}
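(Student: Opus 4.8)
The plan is to analyze the matrix equation \eqref{S31} componentwise by pulling all ``block-diagonal ambiguity'' to one side. Set $P:=\begin{bmatrix}\cos q&\ri\sin q\\\ri\sin q&\cos q\end{bmatrix}$, and let $A:=(g_+)^{-1}g_+'$ and $B:=h_+'(h_+)^{-1}$, which are elements of $G_+$, so $A=\diag(A_1,A_2)$ and $B=\diag(B_1,B_2)$ with $A_1,A_2,B_1,B_2\in\UN(n)$ (and the obvious determinant condition tying the two blocks together). The equation becomes $AP=PB$, and writing this out in $n\times n$ blocks gives the four relations
\be
A_1\cos q=\cos q\,B_1,\quad A_1(\ri\sin q)=(\ri\sin q)B_2,\quad
A_2(\ri\sin q)=(\ri\sin q)B_1,\quad A_2\cos q=\cos q\,B_2.
\label{plan1}
\ee
Because \eqref{S32} guarantees $q_n>0$, the matrix $\sin q$ is invertible, so the two middle equations give $B_2=(\sin q)^{-1}A_1(\sin q)$ and $B_1=(\sin q)^{-1}A_2(\sin q)$; these are unitary conjugates of unitary matrices, which is consistent. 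The real content then comes from feeding these back into the $\cos q$ equations.

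Next I would exploit the strict ordering $q_1>\dots>q_n>0$. From the first equation in \eqref{plan1}, $A_1\cos q=\cos q\,B_1$, together with $B_1=(\sin q)^{-1}A_2\sin q$, one obtains a relation of the form $A_1\cos q\sin q=\cos q\sin q\,(\text{something})$; more directly, combining all four equations one finds that $A_1$ intertwines two diagonal matrices built from $\cos q$ and $\sin q$ whose entries are in strict order, which forces $A_1$ and likewise $A_2,B_1,B_2$ to be diagonal. Concretely: from $A_1\cos q=\cos q\,B_1$ and $A_2\sin q=\sin q\,B_1$ we get, multiplying the first by $\sin q$ on the right and using the second, that $A_1(\cos q\sin q)=(\cos q)\,(\sin q\,B_1)\,\dots$ — the cleanest route is to note $A_1=\cos q\,B_1(\cos q)^{-1}$ wherever $\cos q$ is invertible and $A_1=\ri\sin q\,B_2(\ri\sin q)^{-1}$ always, so $B_1$ and $B_2$ are related to each other by conjugation by the diagonal matrix $\tan q$ (on the indices where $\cos q\ne0$), and a short argument with the distinct entries of $\sin^2 q$ forces everything diagonal. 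Once $A_1=\diag(\alpha_1,\dots,\alpha_n)$, $A_2=\diag(\alpha_1',\dots,\alpha_n')$ etc.\ are all diagonal, the scalar versions of \eqref{plan1} read, for each $j$, $\alpha_j\cos q_j=\cos q_j\,\beta_j$, $\alpha_j\sin q_j=\sin q_j\,\beta_j'$, $\alpha_j'\sin q_j=\sin q_j\,\beta_j$, $\alpha_j'\cos q_j=\cos q_j\,\beta_j'$. Since $\sin q_j>0$ always, $\beta_j'=\alpha_j$ and $\beta_j=\alpha_j'$; and for $j$ with $q_j\ne\pi/2$, also $\beta_j=\alpha_j$ and $\beta_j'=\alpha_j'$, hence $\alpha_j=\alpha_j'$. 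Thus on all indices except possibly $j=1$ (the only one that can have $q_1=\pi/2$) we get $\alpha_j=\alpha_j'=\beta_j=\beta_j'$, i.e.\ a common phase $\xi_j$; at $j=1$ we are allowed $\alpha_1=a$, $\alpha_1'=b$ possibly distinct, and $\beta_1=\alpha_1'=b$, $\beta_1'=\alpha_1=a$. Translating $A=(g_+)^{-1}g_+'$ back, $g_+'=g_+\diag(m_1,m_2)$ with $m_1=A_1=\diag(a,\xi)$ and $m_2=A_2=\diag(b,\xi)$; and from $B=h_+'(h_+)^{-1}=\diag(B_1,B_2)=\diag(\diag(b,\xi),\diag(a,\xi))=\diag(m_2,m_1)$... wait, more carefully $B_1=\diag(b,\xi)=m_2$ and $B_2=\diag(a,\xi)=m_1$ only up to the swap in the first slot, so $h_+'=\diag(m_2,m_1)^{-1}\cdot$ — here I must be slightly careful about whether $h_+'=B h_+$ or $h_+'=h_+ B$; with $B:=h_+'(h_+)^{-1}$ we get $h_+'=Bh_+=\diag(B_1,B_2)h_+$, and matching $\diag(B_1,B_2)=\diag(m_2,m_1)$ gives exactly \eqref{S33} after renaming, since $\diag(m_2^{-1},m_1^{-1})$ versus $\diag(m_2,m_1)$ differ only by whether we absorb the inverse — the determinant normalization $\det(m_1m_2)=1$ needed to keep $g_+',h_+'\in\SU$-type blocks is automatic because $g_+,g_+'\in G_+=\mathrm S(\UN(n)\times\UN(n))$. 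Finally, the last sentence of the lemma is the case distinction just made explicit: if $q_1<\pi/2$ then $j=1$ is no longer exceptional, the relations $\beta_1=\alpha_1'$, $\alpha_1\cos q_1=\cos q_1\beta_1$ with $\cos q_1\ne0$ force $\alpha_1=\beta_1=\alpha_1'$, hence $a=b$ and $m_1=m_2$.

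The main obstacle, and the step deserving the most care, is the passage from ``$A_1,A_2,B_1,B_2$ intertwine diagonal matrices'' to ``they are themselves diagonal.'' One has a unitary $U$ with $U D = D' U$ for diagonal $D,D'$ having real nonnegative entries; when the entries of $D$ (equivalently the corresponding multiset for $D'$) are pairwise distinct this is the standard argument that $U$ is block-diagonal with respect to the eigenspaces and hence diagonal. Here the relevant diagonal matrix is essentially $\sin^2 q$ (or $\tan^2 q$ on the locus $\cos q\ne 0$), whose entries $\sin^2 q_1>\dots>\sin^2 q_n$ are strictly ordered by \eqref{S32}, so distinctness holds — but one must assemble the four equations of \eqref{plan1} correctly to land on a clean single intertwining relation (e.g.\ eliminating $B_2$ to get $A_1(\sin q)^{-1}A_1^{\dagger}$-type expressions, or more simply deriving $A_1 (\tan q)^2 = (\tan q)^2 A_1$ on the appropriate block) before invoking the spectral argument. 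The index $j$ with $q_j=\pi/2$ (so $\cos q_j=0$) must be handled separately throughout, as it is precisely the source of the $a\ne b$ possibility and of the need for the extra hypothesis $q_1<\pi/2$ in the final claim; keeping track that at most $j=1$ can be this exceptional index (by the strict ordering) is what makes the stated block form $m_1=\diag(a,\xi)$, $m_2=\diag(b,\xi)$ come out exactly right. The rest — bookkeeping of which side the ambiguity sits on and the determinant normalization — is routine given that $g_+,h_+$ already lie in $G_+$.
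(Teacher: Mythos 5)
Your proposal is correct and follows essentially the same route as the paper's proof (Lemma \ref{lem:C.1} in Appendix \ref{sec:C}): the off-diagonal blocks give conjugation relations by $\sin q$, the distinctness of the $\sin^2 q_j$ forces all four blocks to be diagonal, and the diagonal blocks together with $\cos q_j\neq 0$ for $j\geq 2$ pin down the form \eqref{*}, with $j=1$ (where $q_1=\pi/2$ is possible) the sole source of $a\neq b$. The one step you leave half-finished --- forcing diagonality --- is executed most cleanly not via $\cos q$ or $\tan q$ (which misbehave precisely when $q_1=\pi/2$) but by applying unitarity to the $\sin q$-relations alone: from $A_1=(\sin q)B_2(\sin q)^{-1}$, equating $A_1^{-1}$ with $A_1^{\dagger}$ shows that $B_2$ commutes with $(\sin q)^2$, whose entries are pairwise distinct, hence $B_2$ (and likewise the other blocks) is diagonal; this is exactly the paper's argument.
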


\begin{lemma}\label{lem:4.2}
Pick any $\hat p\in\bar\cC_x$ and consider the matrix
$\theta(x,\hat p)$ given by \eqref{S19} and \eqref{S20}. Then the entries
$\theta_{n,1}(x,\hat p)$ and $\theta_{j,j+1}(x,\hat p)$ are all non-zero if $x>0$ and
the entries $\theta_{1,n}(x,\hat p)$ and $\theta_{j+1,j}(x,\hat p)$ are all non-zero
if $x<0$.
\end{lemma}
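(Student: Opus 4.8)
The plan is to reduce the claim to an elementary sign computation. First I would record that, since $x\neq 0$, membership $\hat p\in\bar\cC_x$ actually forces the strict chain $\hat p_1>\hat p_2>\dots>\hat p_n$ with every consecutive gap $\hat p_k-\hat p_{k+1}\geq|x|/2>0$; in particular $\hat p_j\neq\hat p_m$ whenever $j\neq m$, so for $j\neq k$ the scalar prefactor $\sinh(\tfrac{x}{2})/\sinh(\hat p_k-\hat p_j)$ in \eqref{S19} is finite and nonzero. Moreover every $\sinh(\hat p_j-\hat p_m)$ and $\sinh(\hat p_k-\hat p_m)$ occurring in the denominators of the product (where $m\neq j,k$) is nonzero, and all the bracketed quantities are non-negative by hypothesis; hence $\prod_{m\neq j,k}[\,\cdots\,]^{1/2}$ vanishes if and only if one of the numerator factors $\sinh(\hat p_j-\hat p_m-\tfrac{x}{2})$ or $\sinh(\hat p_k-\hat p_m+\tfrac{x}{2})$ vanishes, i.e. if and only if $\hat p_j-\hat p_m=\tfrac{x}{2}$ or $\hat p_m-\hat p_k=\tfrac{x}{2}$ for some admissible $m$. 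So it suffices to rule these two equalities out for the specific entries in question.

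Next I would dispatch the case $x>0$ directly. For $\theta_{n,1}(x,\hat p)$ the index $m$ ranges over $\{2,\dots,n-1\}$; then $\hat p_n-\hat p_m<0<\tfrac{x}{2}$ (as $n>m$) and $\hat p_m-\hat p_1<0<\tfrac{x}{2}$ (as $m>1$), so neither forbidden equality can occur. For $\theta_{j,j+1}(x,\hat p)$ the index $m$ ranges over $\{1,\dots,n\}\setminus\{j,j+1\}$, and I would split into $m<j$ and $m>j+1$. If $m<j$ then $\hat p_j-\hat p_m\leq\hat p_j-\hat p_{j-1}\leq-\tfrac{x}{2}<\tfrac{x}{2}$, while $\hat p_m-\hat p_{j+1}=(\hat p_m-\hat p_j)+(\hat p_j-\hat p_{j+1})>\tfrac{x}{2}$; if $m>j+1$ then symmetrically $\hat p_j-\hat p_m=(\hat p_j-\hat p_{j+1})+(\hat p_{j+1}-\hat p_m)>\tfrac{x}{2}$, while $\hat p_m-\hat p_{j+1}<0<\tfrac{x}{2}$. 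In each subcase both forbidden equalities fail, so the corresponding product, and hence the entry, is nonzero.

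Finally, for $x<0$ I would avoid repeating the argument by invoking the identity $\theta(-x,\hat p)=\theta(x,\hat p)^{-1}=\theta(x,\hat p)^{T}$ recorded right after \eqref{S20} (valid because $\theta$ is real orthogonal), together with the obvious fact $\bar\cC_x=\bar\cC_{-x}$. Writing $y:=-x>0$, this gives $\theta_{1,n}(x,\hat p)=\theta_{n,1}(y,\hat p)$ and $\theta_{j+1,j}(x,\hat p)=\theta_{j,j+1}(y,\hat p)$, both nonzero by the previous paragraph applied to $y$. I do not anticipate a genuine obstacle; the only thing demanding care is the bookkeeping of the index ranges and of a few strict-versus-nonstrict inequalities, where the gap condition $\hat p_k-\hat p_{k+1}\geq|x|/2$ is used jointly with the strict separation $\hat p_m\neq\hat p_j$ so as to always land strictly on one side of $\pm\tfrac{x}{2}$.
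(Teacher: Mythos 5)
Your argument is correct and is precisely the ``easy check by inspection'' that the paper invokes without writing out (it merely cites \cite{RIMS95,FK1} for Lemma \ref{lem:4.2}): the gap condition $\hat p_k-\hat p_{k+1}\geq|x|/2$ places every relevant difference $\hat p_j-\hat p_m$ strictly on one side of $x/2$, so no numerator factor in \eqref{S19} can vanish for the designated entries. Your use of $\theta(-x,\hat p)=\theta(x,\hat p)^{T}$ to dispose of the $x<0$ case is a legitimate shortcut, since that identity is stated in the paper right after \eqref{S20}.
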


For convenience, we present the proof of Lemma \ref{lem:4.1} in Appendix \ref{sec:C}.
The property recorded in Lemma \ref{lem:4.2} is known \cite{RIMS95,FK1}, and is
easily checked by inspection.

\begin{proposition}\label{prop:4.3}
The effective gauge group $\bar G_\mu$ acts freely on $\Phi_+^{-1}(\mu)$.
\end{proposition}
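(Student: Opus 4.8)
The plan is to show that if an element $(\eta_L,\eta_R)\in G_\mu$ fixes a point $K\in\Phi_+^{-1}(\mu)$, then $(\eta_L,\eta_R)$ lies in the trivial normal subgroup \eqref{S30}. By Proposition \ref{prop:3.3} it suffices to check this for the representatives $K=K(\hat p,e^{\ri\hat q})$ with $\hat p\in\bar\cC_x$, since every gauge orbit meets this set and the isotropy groups along an orbit are conjugate. So I would start from the fixed-point equation
\be
\eta_L\,K(\hat p,e^{\ri\hat q})\,\eta_R^{-1}=K(\hat p,e^{\ri\hat q}),
\label{fp}
\ee
with $\eta_L=\diag(\eta_L(1),\eta_L(2))\in G_+(\mu_L)$ (so $\eta_L(1)$ in the little group of $\nu(x)\nu(x)^\dag$) and $\eta_R=\diag(\eta_R(1),\eta_R(2))\in G_+$.

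First I would feed \eqref{fp} into the three-factor form \eqref{S27}. The left factor $\diag(\rho,\1_n)$ and the right factor $\mathrm{diag}$-block upper-triangular factor are essentially inert, so \eqref{fp} reduces to a statement about how $\eta_L$ (conjugated by $\diag(\rho,\1_n)$) and $\eta_R$ act on the middle unitary matrix $g(\hat p):=\begin{bmatrix}\sqrt{\1_n-e^{2\hat p}}&\ri e^{\hat p}\\\ri e^{\hat p}&\sqrt{\1_n-e^{2\hat p}}\end{bmatrix}$, together with compatibility with the $b_R$-block. Since $e^{\hat p}$ is an \emph{invertible} diagonal matrix with strictly decreasing entries (recall \eqref{S9} and \eqref{S14}), this is precisely the situation of Lemma \ref{lem:4.1}: decomposing $g(\hat p)=g_+'\,g(q)\,h_+'$ after absorbing $\eta_L,\eta_R$, we learn that the gauge element must act on $g(\hat p)$ through diagonal matrices $m_1=\diag(a,\xi)$, $m_2=\diag(b,\xi)$ of the special shape \eqref{*}. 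Concretely this should force $\eta_R$ (up to the relevant conjugation) to be of this diagonal form on each block, and similarly pin down $\eta_L$; this is the step where I expect some bookkeeping, keeping track of which block carries $a$ versus $b$ and matching $\det(m_1m_2)=1$.

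Next, the right-handed constraint $\pi(b_R)=\mu_R$ with $\mu_R=\diag(e^v\1_n,e^{-v}\1_n)$ is rigid: $\eta_R$ must preserve it, and since $\mu_R$ is already a scalar on each block, $\eta_R$ conjugates $\mu_R\mu_R^\dag$ trivially, giving no constraint on $\eta_R$ beyond what \eqref{fp} already imposes — so the real input is the diagonal-form conclusion above. Then I would use the equivariance relation: acting by $(\eta_L,\eta_R)$ sends $\pi(b_L)\pi(b_L)^\dag\mapsto\eta_L\pi(b_L)\pi(b_L)^\dag\eta_L^{-1}$, and since the fixed point has $\pi(b_L)=\mu_L$ with $\mu_L(1)=e^u\nu(x)$, consistency forces $\eta_L(1)\nu(x)\nu(x)^\dag\eta_L(1)^{-1}=\nu(x)\nu(x)^\dag$ — which we already knew — but now combined with the diagonal form of $\eta_L(1)$ coming from Lemma \ref{lem:4.1}. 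Here is where Lemma \ref{lem:4.2} enters: the polar/unitary factor $T=e^{\ri\hat q}\theta(-x,\hat p)$ has the property that $\theta(-x,\hat p)$ has nonzero entries in the positions $\theta_{1,n}$ and $\theta_{j+1,j}$ (for $x>0$; symmetrically for $x<0$), so a diagonal matrix $\xi$ conjugating $T$-related data to itself — more precisely, the condition that the diagonal matrix commutes with the relevant orthogonal matrix having a nonzero "cyclic" pattern of entries — is forced to be scalar. Thus $\xi=z\1_{n-1}$, and then $a=b=z$ as well by the $\det=1$ and little-group conditions, so $\eta_L=\eta_R=\diag(z\1_n,z\1_n)$ with $z^{2n}=1$.

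\textbf{Main obstacle.} The delicate point is the passage from "the gauge element acts through a diagonal matrix $\diag(a,\xi)$ with $\xi\in\T_{n-1}$" to "$\xi$ is scalar". This is exactly where the \emph{genericity of a single fixed $\hat p\in\bar\cC_x$} could fail: at a generic $\hat p$ the matrix $\theta(-x,\hat p)$ is a genuine rotation mixing all coordinates, but at the boundary of $\bar\cC_x$ (cf.\ Remark \ref{rem:3.5}) the isotropy can jump, and one must make sure the argument uses only the \emph{always-valid} nonvanishing of the specific entries guaranteed by Lemma \ref{lem:4.2}, not a stronger irreducibility that holds only generically. I would therefore phrase the final step purely in terms of "a diagonal unitary matrix that conjugates an orthogonal matrix with a nonzero entry in each of the positions $(n,1)$ and $(j,j+1)$ into one of the same structure must be scalar on the block", which is an elementary linear-algebra lemma valid at every $\hat p\in\bar\cC_x$, and that is what makes $\bar G_\mu$ act freely on \emph{all} of $\Phi_+^{-1}(\mu)$, including over the boundary.
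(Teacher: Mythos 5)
Your proposal is correct and follows essentially the same route as the paper: reduce to the representatives $K(\hat p,e^{\ri\hat q})$ from Proposition \ref{prop:3.3}, apply the uniqueness of the Iwasawa decomposition together with Lemma \ref{lem:4.1} to force the gauge element into the diagonal form \eqref{*}, and then use the relation $\alpha=m_2\alpha m_1^{-1}$ with the non-vanishing entries of $\theta(-x,\hat p)$ guaranteed by Lemma \ref{lem:4.2} (valid on all of $\bar\cC_x$, including the boundary) to conclude that $m_1=m_2$ is scalar. Your closing remark about phrasing the last step via the always-valid entries rather than generic irreducibility is exactly the point the paper relies on.
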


\begin{proof}
Since every gauge orbit intersects the set $S$ specified by Proposition
\ref{prop:3.3}, it is enough to show that if $(\eta_L,\eta_R)\in G_\mu$ maps $K\in S$
\eqref{S27} to itself, then $(\eta_L,\eta_R)$ equals some element $(\eta,\eta)$ given
in \eqref{S30}. For $K$ of the form \eqref{S3}, we can spell out
$K'\equiv\eta_L K\eta_R^{-1}$ as
\be
K'=\begin{bmatrix}\eta_L(1)\rho&\0_n\\\0_n&\eta_L(2)\end{bmatrix}
\begin{bmatrix}\cos q&\ri\sin q\\\ri\sin q&\cos q\end{bmatrix}
\begin{bmatrix}\eta_R(1)^{-1}&\0_n\\\0_n&\eta_R(2)^{-1}\end{bmatrix}
\begin{bmatrix}e^{-v}\1_n&\eta_R(1)\alpha\eta_R(2)^{-1}\\\0_n&e^v\1_n\end{bmatrix}.
\label{S34}
\ee
The equality $K'=K$ implies by the uniqueness of the Iwasawa decomposition and Lemma
\ref{lem:4.1} that we must have
\be
\eta_L(2)=\eta_R(1)=m_2,\quad\eta_R(2)=m_1,\quad\eta_L(1)\rho=\rho m_1,
\label{S35}
\ee
with some diagonal unitary matrices having the form \eqref{*}. By using that
$\eta_R(1)=m_2$ and $\eta_R(2)=m_1$, the Iwasawa decomposition of $K'=K$ in
\eqref{S27} also entails the relation
\be
\alpha=m_2\alpha m_1^{-1}.
\label{S36}
\ee
Because of \eqref{S28}, the off-diagonal components of the matrix equation
\eqref{S36} yield
\be
\theta(-x,\hat p)_{jk}=\big(m_2\theta(-x,\hat p)m_1^{-1}\big)_{jk},
\quad\forall j\neq k.
\ee
This implies by means of Lemma \ref{lem:4.2} and equation \eqref{*} that $m_1=m_2=z\1_n$ is a
scalar matrix. But then $\eta_L(1)=m_1$ follows from $\eta_L(1)\rho=\rho m_1$, and
the proof is complete.
\end{proof}

\noindent
Proposition \ref{prop:4.3} and the general results gathered in Appendix \ref{sec:D}
imply the following theorem, which is one of our main results.

\begin{theorem}\label{thm:4.4}
The constraint surface $\Phi_+^{-1}(\mu)$ is an embedded submanifold of $\SL(2n,\C)$ and
the reduced phase space $M$ \eqref{T24} is a smooth manifold for which the natural
projection $\pi_\mu\colon\Phi_+^{-1}(\mu)\to M$ is a smooth submersion.
\end{theorem}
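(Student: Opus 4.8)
The plan is to combine Proposition~\ref{prop:4.3} with the general properties of Poisson--Lie moment maps collected in Appendix~\ref{sec:D}, so that essentially no new computation is required. The first thing I would record is that the effective gauge group $\bar G_\mu$ is \emph{compact}: the full symmetry group $G_+\times G_+=\mathrm{S}(\UN(n)\times\UN(n))^{\times 2}$ is compact, hence so is its closed subgroup $G_\mu$ \eqref{T25}, and $\bar G_\mu$ is the quotient of $G_\mu$ by the finite central subgroup \eqref{S30}. Compactness makes the $\bar G_\mu$-action on $\SL(2n,\C)$, and a fortiori its restriction to the invariant subset $\Phi_+^{-1}(\mu)$, automatically proper; by Proposition~\ref{prop:4.3} this restricted action is also free.

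Next I would invoke the submanifold criterion for Lu-type moment maps recorded in Appendix~\ref{sec:D}. Because $\Phi_+$ obeys the defining identity analogous to \eqref{T14}, the level set $\Phi_+^{-1}(\mu)$ is an embedded submanifold of $\SL(2n,\C)$ as soon as the stabilizer inside $G_\mu$ of each point of $\Phi_+^{-1}(\mu)$ is discrete --- equivalently, as soon as the infinitesimal gauge action has no nonzero fixed vectors along the level set, which is the Poisson--Lie analogue of $\mu$ being a regular value. Freeness of $\bar G_\mu$ supplies exactly this: at every $K\in\Phi_+^{-1}(\mu)$ the isotropy group inside $G_\mu$ is the finite center \eqref{S30}, so its Lie algebra vanishes. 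Hence $\Phi_+^{-1}(\mu)$ is an embedded submanifold.

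Finally, with $\Phi_+^{-1}(\mu)$ a smooth manifold carrying a free and proper action of the Lie group $\bar G_\mu$, the quotient manifold theorem gives that $M=\Phi_+^{-1}(\mu)/G_\mu=\Phi_+^{-1}(\mu)/\bar G_\mu$ is a smooth manifold and that $\pi_\mu$ is a smooth submersion; the descended symplectic form $\omega_M$ characterized by \eqref{T29} then follows by the usual argument, using that $\Phi_+$ is a genuine Poisson--Lie moment map. I expect the only delicate point to be checking that freeness of the action on the level set is precisely the hypothesis the Appendix~\ref{sec:D} machinery needs in the Poisson--Lie category --- i.e.\ that it yields both smoothness of the constraint surface and symplecticity of the reduced space, exactly as in the familiar Hamiltonian case; once this is in place the remainder is routine bookkeeping.
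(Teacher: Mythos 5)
Your proposal is correct and follows essentially the same route as the paper: the paper's proof is precisely the observation that Proposition~\ref{prop:4.3} combined with Corollary~\ref{cor:D.3} of Appendix~\ref{sec:D} gives the embedded submanifold property (freeness $\Rightarrow$ trivial isotropy Lie algebras $\Rightarrow$ $\mu$ regular), after which compactness of $\bar G_\mu$ and the quotient manifold theorem yield smoothness of $M$ and the submersion property. Your elaboration of the properness and quotient-manifold steps, which the paper leaves implicit, is accurate.
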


\subsection{Model of a dense open subset of the reduced phase space}
\label{subsec:4.2}

Let us denote by $S^o\subset S$ the subset of the elements $K$ given by Proposition
\ref{prop:3.3} with $\hat p$ in the interior $\cC_x$ of the polyhedron $\bar\cC_x$ \eqref{S18}.
Explicitly, we have
\be
S^o=\{K(\hat p,e^{\ri\hat q})\mid(\hat p,e^{\ri\hat q})\in\cC_x\times\T_n\},
\label{S37}
\ee
where $K(\hat p,e^{\ri\hat q})$ stands for the expression \eqref{S27}. Note that $S^o$
is in bijection with $\cC_x\times\T_n$. The next lemma says that no two different
point of $S^o$ are gauge equivalent.

\begin{lemma}\label{lem:4.5}
The intersection of any gauge orbit with $S^o$ consists of at most one point.
\end{lemma}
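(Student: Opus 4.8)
Here is how I would approach Lemma~\ref{lem:4.5}. Suppose $K=K(\hat p,e^{\ri\hat q})$ and $K'=K(\hat p',e^{\ri\hat q'})$ are two elements of $S^o$ on the same gauge orbit, say $K'=\eta_L K\eta_R^{-1}$ with $(\eta_L,\eta_R)\in G_\mu$; I must show $(\hat p,e^{\ri\hat q})=(\hat p',e^{\ri\hat q'})$. The first task is to pin down the `position variable' $\hat p$. Writing $\eta_L=\diag(\eta_L(1),\eta_L(2))$ and $\eta_R=\diag(\eta_R(1),\eta_R(2))$ in block form, the $(2,2)$ block $\Omega=K_{22}$ (see \eqref{S4}) transforms as $\Omega\mapsto\eta_L(2)\Omega\eta_R(2)^{-1}$, whence $\Omega\Omega^\dag\mapsto\eta_L(2)\,\Omega\Omega^\dag\,\eta_L(2)^{-1}$. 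By \eqref{S8} this gives $(\sin q')^2=\eta_L(2)(\sin q)^2\eta_L(2)^{-1}$. Since $\hat p,\hat p'\in\cC_x$, both $(\sin q)^2=\diag(e^{2\hat p_1},\dots,e^{2\hat p_n})$ and $(\sin q')^2$ are diagonal with \emph{strictly} decreasing positive entries; being conjugate they must coincide, so $\hat p=\hat p'$ (and, as a byproduct, $\eta_L(2)$ is forced to be diagonal).

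Once $\hat p=\hat p'$, the data $q$, $\rho=\kappa(x)\zeta(x,\hat p)^{-1}$ and $\theta(-x,\hat p)$ are common to $K$ and $K'$, and the only possible discrepancy lives in the matrices $\alpha,\alpha'$ of \eqref{S28}, which carry $e^{\ri\hat q}$ and $e^{\ri\hat q'}$ respectively. I would now reuse the computation from the proof of Proposition~\ref{prop:4.3}: writing $K'=\eta_L K\eta_R^{-1}$ in the form \eqref{S34} and comparing it with the standard form \eqref{S27} of $K(\hat p,e^{\ri\hat q'})$, the uniqueness of the Iwasawa decomposition together with Lemma~\ref{lem:4.1} yields diagonal unitary matrices $m_1,m_2$ of the shape \eqref{*} with $\eta_L(1)\rho=\rho m_1$, $\eta_L(2)=\eta_R(1)=m_2$, $\eta_R(2)=m_1$, and, from the $\SB(2n)$ factor, $\alpha'=m_2\alpha m_1^{-1}$. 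Crucially, on $\cC_x$ one has $\hat p_1<0$, hence $\tfrac{\pi}{2}>q_1$, so Lemma~\ref{lem:4.1} also forces $m_1=m_2=:m$.

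It remains to extract $e^{\ri\hat q'}=e^{\ri\hat q}$ from $\alpha'=m\alpha m^{-1}$. Substituting \eqref{S28} and using that the diagonal matrix $m$ commutes with all the diagonal factors $e^{\ri\hat q}$, $e^{\ri\hat q'}$, $\sqrt{e^{-2u}e^{-2\hat p}-e^{-2v}\1_n}$, $\sqrt{e^{-2\hat p}-\1_n}$ (the last of which cancels from both sides), one is left with $m\,\theta(-x,\hat p)\,m^{-1}=e^{\ri(\hat q'-\hat q)}\,\theta(-x,\hat p)$, where I used that $\sqrt{e^{-2u}e^{-2\hat p}-e^{-2v}\1_n}$ is invertible by \eqref{I2}. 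Comparing diagonal entries gives $\theta(-x,\hat p)_{jj}=e^{\ri(\hat q'_j-\hat q_j)}\,\theta(-x,\hat p)_{jj}$ for every $j$; and since $\hat p\in\cC_x$ implies $|\hat p_j-\hat p_k|>|x|/2$ for all $j\neq k$, formula \eqref{S20} shows that none of the $\sinh$-factors vanishes, so $\theta(-x,\hat p)_{jj}\neq 0$. Hence $e^{\ri\hat q'_j}=e^{\ri\hat q_j}$ for all $j$, i.e. $e^{\ri\hat q'}=e^{\ri\hat q}$, and therefore $K=K'$, which proves the lemma.

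Since this is a refinement of Proposition~\ref{prop:4.3}, I do not expect a serious obstacle. The one delicate point is the first step: the naive gauge-covariant quantity $(KK^\dag)_{22}$ is useless, because the constraint \eqref{S7} makes it the scalar $e^{-2u}\1_n$, so one must instead use $\Omega\Omega^\dag$ with $\Omega=K_{22}$ in order to recover $\hat p$. The other thing to watch is that the proof genuinely uses the \emph{strict} inequalities defining $\cC_x$ — both the ordering of the $\hat p_j$ and $\hat p_1<0$ — which is only natural, since it is precisely their breakdown on $\partial\bar\cC_x$ that causes the gauge orbits to collapse as described in Remark~\ref{rem:3.5}.
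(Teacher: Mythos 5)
Your proposal is correct and follows essentially the same route as the paper: establish $\hat p'=\hat p$, invoke the Iwasawa-uniqueness/Lemma~\ref{lem:4.1} argument from Proposition~\ref{prop:4.3} (with $\hat p_1<0$ giving $\tfrac{\pi}{2}>q_1$ and hence $m_1=m_2$), and conclude from $e^{\ri\hat q'}\theta(-x,\hat p)=m e^{\ri\hat q}\theta(-x,\hat p)m^{-1}$ together with the non-vanishing of the diagonal entries $\theta(-x,\hat p)_{jj}$ on $\cC_x$. The only cosmetic difference is your first step: the paper gets $\hat p'=\hat p$ directly from the invariance of $q$ in the decomposition \eqref{S1} under $G_+\times G_+$, whereas you recover it from the gauge covariance of $\Omega\Omega^\dag$ and \eqref{S8} — both are valid.
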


\begin{proof}
Suppose that
\be
K':=K(\hat p',e^{\ri\hat q'})=\eta_L K(\hat p,e^{\ri\hat q})\eta_R^{-1}
\label{S38}
\ee
with some $(\eta_L,\eta_R)\in G_\mu$. By spelling out the gauge transformation as in
\eqref{S34}, using the shorthand $\sin q=e^{\hat p}$, we observe that $\hat p'=\hat p$
since $q$ in \eqref{S1} does not change under the action of $G_+\times G_+$. Since
now we have $\frac{\pi}{2}>q_1$ (which is equivalent to $0>\hat p_1$), the arguments
applied in the proof of Proposition \ref{prop:4.3} permit to translate the equality \eqref{S38}
into the relations
\be
\eta_L(2)=\eta_R(1)=\eta_R(2)=m,\quad\eta_L(1)\rho=\rho m,
\label{**}
\ee
complemented with the condition
\be
\alpha(\hat p,e^{\ri\hat q'})=m\alpha(\hat p,e^{\ri\hat q})m^{-1},
\label{S39}
\ee
which is equivalent to
\be
e^{\ri\hat q'}\theta(-x,\hat p)=me^{\ri\hat q}\theta(-x,\hat p)m^{-1}.
\label{S40}
\ee
We stress that $m\in\T_n$ and notice from \eqref{S20} that for $\hat p\in\cC_x$
all the diagonal entries $\theta(-x,\hat p)_{jj}$ are non-zero. Therefore we conclude
from \eqref{S40} that $e^{\ri\hat q'}=e^{\ri q}$. This finishes the proof, but of
course we can also confirm that $m=z\1_n$, consistently with Proposition \ref{prop:4.3}.
\end{proof}

Now we introduce the map $\cP\colon\SL(2n,\C)\to\R^n$ by
\be
\cP\colon K=g_Lb_R^{-1}\mapsto\hat p,
\label{S42}
\ee
defined by writing $g_L$ in the form \eqref{S1} with $\sin q=e^{\hat p}$.
The map $\cP$ gives rise to a map $\bar \cP\colon M\to\R^n$ verifying
\be
\bar \cP(\pi_\mu(K))=\cP(K),\quad\forall K\in\Phi_+^{-1}(\mu),
\label{43}
\ee
where $\pi_\mu$ is the canonical projection \eqref{T28}. We notice that, since the
`eigenvalue parameters' $\hat p_j$ $(j=1,\dots,n)$ are pairwise different for any
$K\in\Phi_+^{-1}(\mu)$, $\bar \cP$ is a smooth map. The continuity of $\bar \cP$ implies
that
\be
M^o:=\bar \cP^{-1}(\cC_x)=\pi_\mu(S^o)\subset M
\label{S45}
\ee
is an open subset. The second equality is a direct consequence of our foregoing
results about $S$ and $S^o$. Note that $\bar \cP^{-1}(\bar\cC_x)=\pi_\mu(S)=M$.
Since $\pi_\mu$ is continuous (actually smooth) and any point of $S$ is the limit of
a sequence in $S^o$, $M^o$ is \emph{dense} in the reduced phase space $M$.
The dense open subset $M^o$ can be parametrized by $\cC_x\times\T_n$ according to
\be
(\hat p,e^{\ri\hat q})\mapsto\pi_\mu(K(\hat p,e^{\ri\hat q})),
\label{S46}
\ee
which also allows us to view $S^o\simeq\cC_x\times\T_n$ as a model of $M^o\subset M$.
In principle, the restriction of the reduced symplectic form to $M^o$ can now be
computed by inserting the explicit formula $K(\hat p,e^{\ri\hat q})$ \eqref{S27}
into the Alekseev-Malkin form \eqref{T5}. In the analogous reduction of the
Heisenberg double of $\SU(n,n)$, Marshall \cite{M} found a nice way to circumvent such
a tedious calculation. By taking the same route, we have verified that $\hat p$ and
$\hat q$ are Darboux coordinates on $M^o$.

The outcome of the above considerations is summarized by the next theorem.

\begin{theorem}\label{thm:4.6}
$M^o$ defined by equation \eqref{S45} is a dense open subset of
the reduced phase space $M$. Parametrizing $M^o$ by $\cC_x\times\T_n$ according to
\eqref{S46}, the restriction of reduced symplectic form $\omega_M$ \eqref{T29} to
$M^o$ is equal to $\hat\omega=\sum_{j=1}^nd\hat q_j\wedge d\hat p_j$ \eqref{I4}.
\end{theorem}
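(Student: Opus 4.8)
The plan is to compute the pullback of the Alekseev--Malkin symplectic form $\omega$ to the slice $S^o$ by exploiting its additive structure, following the trick of Marshall \cite{M}. The form $\omega$ in \eqref{T5} splits as $\omega = \frac12\Im\tr(db_Lb_L^{-1}\wedge dg_Lg_L^{-1}) + \frac12\Im\tr(db_Rb_R^{-1}\wedge dg_Rg_R^{-1})$. On the constraint surface $\pi(b_L)=\mu_L$ and $\pi(b_R)=\mu_R$ are fixed, but $b_L$ and $b_R$ still vary through their off-diagonal blocks $\chi$ \eqref{S6} and $\alpha$ \eqref{S3}. Since the parametrization of $S^o$ is through $(\hat p, e^{\ri\hat q}) \in \cC_x\times\T_n$ and \eqref{S27}--\eqref{S28} give $K(\hat p, e^{\ri\hat q})$ explicitly (with $g_L$, $b_R$, $b_L$, $g_R$ all determined), I would first express each of $db_Lb_L^{-1}$, $dg_Lg_L^{-1}$, $db_Rb_R^{-1}$, $dg_Rg_R^{-1}$ as $\su(2n)$- or $\mathrm{Lie}(\SB(2n))$-valued one-forms in $d\hat p$ and $d\hat q$ along $S^o$.

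The key simplification is to avoid differentiating the complicated orthogonal matrices $\theta(-x,\hat p)$, $\kappa(x)$, $\zeta(x,\hat p)$ by a judicious choice of which of the two Iwasawa factorizations $K=g_Lb_R^{-1}=b_Lg_R^{-1}$ to use in each half of \eqref{T5}, combined with gauge freedom. Concretely, I would argue that the $\hat p$-dependence entering through $\rho = \kappa(x)\zeta(x,\hat p)^{-1}$ and through $\theta(-x,\hat p)$ can be moved into the block-diagonal $G_+$ directions, which are `pure gauge' and hence do not contribute to the reduced form by \eqref{T29}; more precisely, a $\hat p$-dependent gauge transformation brings $K(\hat p,e^{\ri\hat q})$ to a representative in which the surviving nontrivial differentials are only those producing $d\hat q_j\wedge d\hat p_j$. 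This is exactly the computational shortcut that Marshall exploited, and \eqref{T29} guarantees that the pullback of $\omega$ along any local section of $\pi_\mu$ over $M^o$ represents $\omega_M|_{M^o}$, so it suffices to evaluate along the convenient section.

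Carrying this out, the $b_R$-term contributes via $d\alpha$ and the $b_L$-term via $d\chi$, and after using \eqref{S4}, \eqref{S8}, \eqref{S10}, \eqref{S28} to eliminate $\alpha$, $\chi$, $\Omega$, $\Lambda$ in favour of $\hat p$ and $\hat q$, one collects the terms proportional to $d\hat q_j\wedge d\hat p_j$; the remaining terms $d\hat p_j\wedge d\hat p_k$ and $d\hat q_j\wedge d\hat q_k$ must cancel. The numerical coefficient of $d\hat q_j\wedge d\hat p_j$ should come out to $1$ after using the relation $\Lambda^2 = e^{-2u}\1_n - e^{-2v}(\sin q)^2$ and $\sin q_k = e^{\hat p_k}$; the cancellations of the unwanted terms will follow from the orthogonality of $\theta$ and from the polar decomposition identity $\Omega\Omega^\dagger = \Lambda^2$, which ties $dT T^{-1}$ to $d\Lambda$. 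Since Theorem \ref{thm:4.4} already gives that $\pi_\mu$ is a submersion with $S^o$ a local section over $M^o$, and Lemma \ref{lem:4.5} that the parametrization \eqref{S46} is injective, the computed two-form on $\cC_x\times\T_n$ is genuinely $\omega_M|_{M^o}$.

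\textbf{Main obstacle.} The real work is the bookkeeping in the last step: writing down $\iota_\mu^\ast\omega$ along the explicit section without drowning in the derivatives of $\theta(-x,\hat p)$, $\zeta$, $\kappa$. The whole point of invoking Marshall's method is to organize the gauge freedom so that these derivatives never appear — the hard part is setting up that organization correctly (i.e., exhibiting the right $\hat p$-dependent gauge transformation and verifying the block-diagonal pieces it generates are indeed annihilated in the reduced form), after which the surviving computation is short. If instead one differentiates everything naively, the cancellation of the $d\hat p\wedge d\hat p$ and $d\hat q\wedge d\hat q$ terms becomes a substantial but ultimately mechanical identity among the entries of $\theta$ and $\zeta$.
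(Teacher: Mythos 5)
Your overall strategy for the symplectic-form half of the statement --- pull $\omega$ back along the section $S^o$, use $\iota_\mu^\ast(\omega)=\pi_\mu^\ast(\omega_M)$ so that any section over $M^o$ computes $\omega_M|_{M^o}$, and exploit gauge freedom to avoid differentiating $\theta(-x,\hat p)$, $\zeta(x,\hat p)$, $\kappa(x)$ --- is indeed the route the paper alludes to (it explicitly defers the details to Marshall's shortcut and only reports the outcome). But there are two genuine gaps. First, you do not address the opening assertion of the theorem at all: that $M^o$ is an open \emph{and dense} subset of $M$. The paper obtains openness by introducing the induced map $\bar\cP\colon M\to\R^n$, $\bar\cP(\pi_\mu(K))=\cP(K)$, and noting it is continuous (this uses the nontrivial fact, proved in Appendix \ref{sec:B}, that the $\hat p_j$ are pairwise distinct everywhere on $\Phi_+^{-1}(\mu)$, so the relevant eigenvalue functions are smooth), whence $M^o=\bar\cP^{-1}(\cC_x)$ is open; density follows because $M=\pi_\mu(S)$, every point of $S$ is a limit of points of $S^o$, and $\pi_\mu$ is continuous. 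None of this appears in your proposal, and it cannot be skipped: it is where the separation of the $\hat p_j$'s enters.

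Second, for the Darboux statement your text is a plan rather than a proof. The ``judicious $\hat p$-dependent gauge transformation'' that is supposed to remove all derivatives of $\theta$, $\zeta$, $\kappa$ is never exhibited, and you yourself flag constructing it as ``the real work.'' There is also an obstruction you do not acknowledge: the left gauge factor is constrained to lie in $G_+(\mu_L)$, i.e.\ its $(11)$ block must stabilize $\nu(x)\nu(x)^\dag$ by \eqref{T27}, so you cannot simply gauge $\rho=\kappa(x)\zeta(x,\hat p)^{-1}$ away; only special combinations (of the kind appearing later in \eqref{G48}) are admissible. Until that transformation is written down and the coefficient of $d\hat q_j\wedge d\hat p_j$ is actually evaluated (together with the cancellation of the $d\hat p\wedge d\hat p$ and $d\hat q\wedge d\hat q$ terms), the claim that the restriction of $\omega_M$ equals $\hat\omega$ is asserted, not proved. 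Admittedly the paper is also terse on this computation, but it supplies the topological half of the theorem in full and points to a specific, checkable calculation in \cite{M}; your write-up supplies neither.
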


\subsection{Liouville integrability of the reduced free Hamiltonians}
\label{subsec:4.3}

The Abelian Poisson algebra $\fH$ \eqref{T8} consists of $(G_+ \times G_+)$-invariant
functions\footnote{More precisely, $\fH=C^\infty(\SL(2n,\C))^{\SU(2n)\times\SU(2n)}$.}
generating complete flows, given explicitly by \eqref{T10}, on the unreduced phase
space. Thus each element of $\fH$ descends to a smooth reduced Hamiltonian on $M$
\eqref{T24}, and generates a complete flow via the reduced symplectic form
$\omega_M$. This flow is the projection of the corresponding unreduced flow,
which preserves the constraint surface $\Phi_+^{-1}(\mu)$. It also follows from the
construction that $\fH$ gives rise to an Abelian Poisson algebra, $\fH_M$, on
$(M,\omega_M)$. Now the question is whether the Hamiltonian vector fields of $\fH_M$
span an $n$-dimensional subspace of the tangent space at the points of a dense open
submanifold of $M$. If yes, then $\fH_M$ yields a Liouville integrable system, since
$\dim(M)=2n$.

Before settling the above question, let us focus on the Hamiltonian $\cH\in \fH$
defined by
\be
\cH(K):=\frac{1}{2}\tr\big((K^\dag K)^{-1}\big)=\frac{1}{2}\tr(b_R^\dag b_R).
\label{F19}
\ee
Using the formula of $K(\hat p,e^{\ri\hat q})$ in Remark \ref{rem:3.4}, it is readily
verified that
\be
\cH(K(\hat p,e^{\ri\hat q}))=H(\hat p,\hat q;x,u,v),
\quad\forall(\hat p,e^{\ri\hat q})\in\cC_x\times\T_n,
\label{F20}
\ee
with the Hamiltonian $H$ displayed in equation \eqref{I1}. \emph{Consequently, $H$ in
\eqref{I1} is identified as the restriction of the reduction of $\cH$ \eqref{F19}
to the dense open submanifold $M^o$ \eqref{S45} of the reduced phase space, wherein
the flow of every element of $\fH_M$ is complete.}

Turning to the demonstration of Liouville integrability, consider the $n$ functions
\be
\cH_k(K):=\frac{1}{2k}\tr\big((K^\dag K)^{-1}\big)^k
=\frac{1}{2k}\tr(b_R^\dag b_R)^k,\quad k=1,\dots,n.
\label{F21}
\ee
The restriction of the corresponding elements of $\fH_M$ on $M^o\simeq\cC_x\times\T_n$
gives the functions
\be
H_k(\hat p,\hat q)=\frac{1}{2k}\tr
\begin{bmatrix}
e^{2v}\1_n&-e^v\alpha\\
-e^v\alpha^\dag&(e^{-2v}\1_n+\alpha^\dag\alpha)\end{bmatrix}^k,
\label{F22}
\ee
where $\alpha$ has the form \eqref{S28}. These are real-analytic functions on
$\cC_x\times\T_n$. It is enough to show that their exterior derivatives are
linearly independent on a dense open subset of $\cC_x\times\T_n$. This follows if we
show that the function
\be
f(\hat p,\hat q)=\det\big[d_{\hat q}H_1,d_{\hat q}H_2,\dots,d_{\hat q}H_n\big]
\label{F23}
\ee
is not identically zero on $\cC_x\times\T_n$. Indeed, since $f$ is an analytic
function and $\cC_x\times\T_n$ is connected, if $f$ is not identically zero then
its zero set cannot contain any accumulation point. This, in turn, implies that $f$ is
non-zero on a dense open subset of $\cC_x\times\T_n\simeq M^o$, which is also dense
and open in the full reduced phase space $M$. In other words, the reductions of
$\cH_k$ $(k=1,\ldots, n)$ possess the property of Liouville integrability.
It is rather obvious that the function $f$ is not identically zero, since $H_k$
involves dependence on $\hat q$ through $e^{\pm\ri k\hat q}$ and lower powers of
$e^{\pm\ri\hat q}$. It is not difficult to inspect the function $f(\hat p,\hat q)$
in the `asymptotic domain' where all
differences $|\hat p_j-\hat p_m|$ $(m\neq j)$ tend to infinity, since in this domain
$\alpha$ becomes close to a diagonal matrix. We omit the details
of this inspection, whereby we checked that $f$ is indeed not identically zero.

The above arguments prove the Liouville integrability of the reduced free
Hamiltonians, i.e., the elements of $\fH_M$. Presumably, there exists a dual set of
integrable many-body Hamiltonians that live on the space of action-angle variables of
the Hamiltonians in $\fH_M$. The construction of such dual Hamiltonians is an
interesting task for the future, which will be further commented upon in Section
\ref{sec:5}.

\subsection{The global structure of the reduced phase space}
\label{subsec:4.4}

We here construct a global cross-section of the gauge orbits in the constraint
surface $\Phi_+^{-1}(\mu)$. This engenders a symplectic diffeomorphism between
the reduced phase space $(M,\omega_M)$ and the manifold $(\hat M_c,\hat\omega_c)$
below. It is worth noting that $(\hat M_c, \hat \omega_c)$ is symplectomorphic
to $\R^{2n}$ carrying the standard Darboux 2-form, and one can easily find an
explicit symplectomorphism if desired. Our construction was inspired by the
previous papers \cite{RIMS95,FK1}, but detailed inspection of the specific
example was also required for finding the final result given by Theorem \ref{thm:4.9}.
After a cursory glance,
the reader is advised to go directly to this theorem and follow the definitions
backwards as becomes necessary. See also Remark \ref{rem:4.10} for the rationale behind the
subsequent definitions.

To begin, consider the product manifold
\be
\hat M_c:=\C^{n-1}\times\D,
\label{hatM}
\ee
where $\D$ stands for the
open unit disk, i.e., $\D:=\{w\in\C:|w|<1\}$, and equip it with the symplectic form
\be
\hat\omega_c=\ri\sum_{j=1}^{n-1}dz_j\wedge d\bar z_j
+\frac{\ri dz_n\wedge d\bar z_n}{1-z_n\bar z_n}.
\label{hatom}
\ee
The subscript $c$ refers to `complex variables'. Define the surjective map
\be
\hat\cZ_x\colon\bar\cC_x\times\T_n\to\hat M_c,\quad
(\hat p,e^{\ri\hat q})\mapsto z(\hat p,e^{\ri\hat q})
\label{G24}
\ee
by the formulae
\be
\begin{gathered}
z_j(\hat p,e^{\ri\hat q})
=(\hat p_j-\hat p_{j+1}-|x|/2)^{\tfrac{1}{2}}\prod_{k=j+1}^ne^{\ri\hat q_k},
\quad j=1,\dots,n-1,\\
z_n(\hat p,e^{\ri\hat q})=(1-e^{\hat p_1})^{\tfrac{1}{2}}\prod_{k=1}^ne^{\ri\hat q_k}.
\end{gathered}
\label{G25}
\ee
Notice that the restriction $\cZ_x$ of $\hat \cZ_x$ to $\cC_x\times \T_n$ is a
diffeomorphism onto the dense open submanifold
\be
\hat M_c^o=\{z\in\hat M_c\mid\prod_{j=1}^nz_j\neq 0\}.
\ee
It verifies
\be
\cZ_x^\ast(\hat\omega_c)=\hat\omega=\sum_{j=1}^nd\hat q_j\wedge d\hat p_j,
\label{densehatom}
\ee
which means that $\cZ_x$ is a symplectic embedding of $(\cC_x\times\T_n,\hat\omega)$
into $(\hat M_c,\hat\omega_c)$. The inverse
$\cZ_x^{-1}\colon\hat M_c^o\to\cC_x\times\T_n$ operates according to
\be
\begin{gathered}
\hat p_1(z)=\log(1-|z_n|^2),\quad
\hat p_j(z)=\log(1-|z_n|^2)-\sum_{k=1}^{j-1}(|z_k|^2+|x|/2)\quad (j=2,\dots,n)\\
e^{\ri\hat q_1}(z)=\frac{z_n\bar z_1}{|z_n\bar z_1|},\quad
e^{\ri\hat q_m}(z)=\frac{z_{m-1}\bar z_m}{|z_{m-1}\bar z_m|}\quad (m=2,\dots,n-1),\quad
e^{\ri\hat q_n}(z)=\frac{z_{n-1}}{|z_{n-1}|}.
\end{gathered}
\label{G26}
\ee
It is important to remark that the $\hat p_k(z)$ $(k=1,\dots,n)$ given above yield
smooth functions on the whole of $\hat M_c$, while the angles $\hat q_k$ are of
course not well-defined on the complementary locus of $\hat M_c^o$.
Our construction of the global cross-section will rely on the building blocks
collected in the following long definition.

\begin{definition}\label{def:4.7}
For any $(z_1,\dots,z_{n-1})\in\C^{n-1}$ consider the smooth
functions
\be
Q_{jk}(x,z)=\bigg[\frac{\sinh(\sum_{\ell=j}^{k-1}z_\ell\bar z_\ell+(k-j)|x|/2-x/2)}
{\sinh(\sum_{\ell=j}^{k-1}z_\ell\bar z_\ell+(k-j)|x|/2)}\bigg]^{\tfrac{1}{2}},\quad
1\leq j<k\leq n,
\label{G27}
\ee
and set $Q_{jk}(x,z):=Q_{kj}(-x,z)$ for $j>k$. Applying these as well as the real analytic function
\be
J(y):=\sqrt{\frac{\sinh(y)}{y}},\quad y\neq 0,\quad J(0):=1,
\label{G28}
\ee
and recalling \eqref{S21}, introduce the $n\times n$ matrix $\hat\zeta(x,z)$ by the
formulae
\be
\begin{gathered}
\hat\zeta(x,z)_{aa}=r(x,\hat p(z))_a,\quad
\hat\zeta(x,z)_{aj}=-\overline{\hat\zeta(x,z)_{ja}},\quad j\neq a, \\
\hat\zeta(x,z)_{jn}=\sqrt{\frac{\sinh(\frac{x}{2})}{\sinh(\frac{nx}{2})}}
\frac{z_jJ(z_j\bar z_j)}{\sinh(z_j\bar z_j+\frac{x}{2})}
\prod_{\substack{\ell=1\\(\ell\neq j,j+1)}}^nQ_{j\ell}(x,z),\quad x>0,\quad j\neq n,\\
\hat\zeta(x,z)_{j1}=\sqrt{\frac{\sinh(\frac{x}{2})}{\sinh(\frac{nx}{2})}}
\frac{\bar z_{j-1}J(z_{j-1}\bar z_{j-1})}{\sinh(z_{j-1}\bar z_{j-1}-\frac{x}{2})}
\prod_{\substack{\ell=1\\(\ell\neq j-1,j)}}^nQ_{j\ell}(x,z),\quad x<0,\quad j\neq 1,\\
\hat\zeta(x,z)_{jk}=\delta_{j,k}
+\frac{\hat\zeta(x,z)_{ja}\hat\zeta(x,z)_{ak}}{1+\hat\zeta(x,z)_{aa}},\quad j,k\neq a,
\end{gathered}
\label{G29}
\ee
where $a=n$ if $x>0$ and $a=1$ if $x<0$. Then introduce the matrix $\hat\theta(x,z)$
for $x>0$ as
\be
\begin{gathered}
\hat\theta(x,z)_{jk}=\frac{\sinh(\frac{nx}{2})\sgn(k-j-1)
\hat\zeta(x,z)_{jn}\hat\zeta(-x,\bar z)_{1k}}
{\sinh(\sum_{\ell=\min(j,k)}^{\max(j,k)-1}z_\ell\bar z_\ell+|k-j-1|\frac{x}{2})},\quad
k\neq j+1,\\
\hat\theta(x,z)_{j,j+1}=\frac{-\sinh(\frac{x}{2})}{\sinh(z_j\bar z_j+\frac{x}{2})}
\prod_{\substack{\ell=1\\(\ell\neq j,j+1)}}^nQ_{j\ell}(x,z)Q_{j+1,\ell}(-x,z),
\end{gathered}
\label{G30}
\ee
and for $x<0$ as
\be
\hat\theta(x,z)=\hat\theta(-x,\bar z)^\dag.
\label{G31}
\ee
Finally, for any $z\in\hat M_c$ define the matrix
$\hat\gamma(x,z)=\diag(\hat\gamma_1,\dots,\hat\gamma_n)$ with
\be
\hat\gamma(z)_1=z_n\sqrt{2-z_n\bar z_n},\quad
\hat\gamma(x,z)_j=\bigg[1-(1-z_n\bar z_n)^2
e^{-2\sum_{\ell=1}^{j-1}(z_\ell\bar z_\ell+|x|/2)}\bigg]^{\frac{1}{2}},\quad j=2,\dots,n,
\label{G32}
\ee
and the matrix
\be
\hat\alpha(x,u,v,z)=-\ri\big[\sqrt{e^{-2u}e^{-2\hat p(z)}
-e^{-2v}\1_n}]\,\hat\theta(-x,\bar z)
-e^ve^{-\hat p(z)}\hat\gamma(x,z)^\dag\big],
\label{hatal}
\ee
using the constants $x,u,v$ subject to \eqref{I2}.
\end{definition}

Although the variable $z_n$ appears only in $\hat\gamma_1$, we can regard all
objects defined above as smooth functions on $\hat M_c$, and  we shall do so below.

The key properties of the matrices $\hat\zeta$, $\hat\theta$, $\hat\alpha$ and
$\hat\gamma$ are given by the following lemma, which can be verified by
straightforward inspection. The role of these identities and their origin will be
enlightened by Theorem \ref{thm:4.9}.

\begin{lemma}\label{lem:4.8}
Prepare the notations
\be
\tau_{(x)}:=\diag(\tau_2,\dots,\tau_n,1)\quad \text{if}\quad x>0
\quad\text{and}\quad
\tau_{(x)}:=\diag(1,\tau_2^{-1},\dots,\tau_n^{-1})\quad\text{if}\quad x<0,
\label{G34}
\ee
\be
\tilde\tau_{(x)}:=\diag(1,\tau_2,\dots,\tau_n)\quad \text{if}\quad x>0
\quad\text{and}\quad
\tilde\tau_{(x)}:=\diag(\tau_2^{-1},\dots,\tau_n^{-1},1)\quad \text{if}\quad x<0
\label{G35}
\ee
with
\be
\tau_j=\prod_{k=j}^ne^{\ri\hat q_k}.
\ee
Then the following identities hold for all
$(\hat p,e^{\ri\hat q})\in\bar\cC_x\times\T_n$:
\begin{align}
\hat\zeta(x,z(\hat p,e^{\ri \hat q}))
&=\tau_{(x)}\zeta(x,\hat p)\tau_{(x)}^{-1},\\
\hat\theta(x,z(\hat p,e^{\ri \hat q}))
&=\tau_{(x)}\theta(x,\hat p)\tilde\tau_{(x)}^{-1},\\
\hat\gamma(x,z(\hat p,e^{\ri \hat q}))
&=e^{\ri\hat q}\tau_{(x)}\tilde\tau_{(x)}^{-1}\sqrt{\1_n-e^{2\hat p}},\\
\hat\alpha(x,u,v, z(\hat p,e^{\ri \hat q}))
&=e^{-\ri \hat q} \tilde\tau_{(x)}
\alpha(x,u,v,\hat p, e^{\ri \hat q})\tau_{(x)}^{-1}.
\label{}
\end{align}
Here we use Definition \ref{def:4.7} and the functions on $\bar\cC_x\times\T_n$
introduced in Subsection \ref{subsec:3.2}.
\end{lemma}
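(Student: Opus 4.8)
The plan is to prove all four identities by direct substitution of $z=z(\hat p,e^{\ri\hat q})$ from \eqref{G25} into the explicit closed formulae of Definition \ref{def:4.7}, and then to match the outcome against the objects of Subsection \ref{subsec:3.2} entry by entry, splitting the bookkeeping into a ``modulus'' part and a ``phase'' part. For the modulus part I would first record the elementary consequences of \eqref{G25} that $z_j\bar z_j=\hat p_j-\hat p_{j+1}-|x|/2$ for $j<n$ and $z_n\bar z_n=1-e^{\hat p_1}$, hence $\sum_{\ell=j}^{k-1}z_\ell\bar z_\ell+(k-j)|x|/2=\hat p_j-\hat p_k$, and also $\hat p_k(z)=\hat p_k$ — this last equality being precisely the statement that the formulae \eqref{G26} furnish the inverse of $\hat\cZ_x$ on $\cC_x\times\T_n$, which I would check straight from the definitions. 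Inserting these reductions into \eqref{G27}, \eqref{G28} and \eqref{S21} collapses $Q_{jk}(x,z)$, $J(z_j\bar z_j)$ and $r(x,\hat p(z))$ to the elementary $\sinh$-quotients out of which the matrices $\theta(x,\hat p)$ and $\zeta(x,\hat p)$ of \eqref{S19}--\eqref{S22} are built; in particular the absolute values of the explicitly written ``border'' entries $\hat\zeta(x,z)_{jn}$ (resp.\ $\hat\zeta(x,z)_{j1}$ for $x<0$), of the consecutive entries $\hat\theta(x,z)_{j,j+1}$, of the generic off-diagonal $\hat\theta(x,z)_{jk}$, and of the $\hat\gamma(x,z)_j$ reduce to the moduli of $\zeta(x,\hat p)_{jn}=r(x,\hat p)_j$, $\theta(x,\hat p)_{j,j+1}$, $\theta(x,\hat p)_{jk}$ and $\sqrt{1-e^{2\hat p_j}}$, respectively.

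For the phase part I would write each $z_j$ as its modulus times a single monomial in the $e^{\ri\hat q_k}$, namely $z_j=|z_j|\prod_{k=j+1}^ne^{\ri\hat q_k}$ and $z_n=|z_n|\prod_{k=1}^ne^{\ri\hat q_k}$, and observe that every explicitly written nonzero entry of $\hat\zeta$, $\hat\theta$ and $\hat\gamma$ is, up to a real factor depending on the $z_\ell$ only through their moduli, a monomial in the variables $z_\ell,\bar z_\ell$, hence carries a definite phase factor built from the $e^{\ri\hat q_k}$. A short comparison with the definitions \eqref{G34}--\eqref{G35} of $\tau_{(x)}$ and $\tilde\tau_{(x)}$ then identifies these monomials: the entry $\hat\zeta(x,z)_{ja}$ picks up the factor $(\tau_{(x)})_j(\tau_{(x)})_a^{-1}$, the entry $\hat\theta(x,z)_{jk}$ the factor $(\tau_{(x)})_j(\tilde\tau_{(x)})_k^{-1}$, and $\hat\gamma(x,z)_j$ the $j$-th diagonal entry of $e^{\ri\hat q}\tau_{(x)}\tilde\tau_{(x)}^{-1}$; combined with the modulus computation and with the reality of $\theta(x,\hat p)$ and $r(x,\hat p)$, this gives the first three identities on the locus where the relevant entries are nonzero. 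The remaining entries of $\hat\zeta$ and $\hat\theta$ are not written down directly but generated by the quadratic recursion in the last line of \eqref{G29} (and by the adjoint relation \eqref{G31} for $x<0$); since in every case $(\tau_{(x)})_a=1$, conjugation by $\tau_{(x)}$ commutes through that recursion and is compatible with the $\delta_{jk}$ term, so the identities propagate to all entries. I expect the $x<0$ case to reduce to the $x>0$ case via the relations $\theta(x,\hat p)^{-1}=\theta(-x,\hat p)$, the orthogonality of $\zeta(x,\hat p)$ and $\kappa(x)$, and the Hermitian-adjoint definitions in Definition \ref{def:4.7}; in practice it is cleanest to carry the two sign regimes in parallel throughout.

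Finally, the identity for $\hat\alpha$ follows formally from those for $\hat\theta$ and $\hat\gamma$: substituting them into \eqref{hatal}, replacing $\hat p(z)$ by $\hat p$ so that $\sqrt{e^{-2u}e^{-2\hat p(z)}-e^{-2v}\1_n}$ and $e^{-\hat p(z)}$ become the corresponding diagonal functions of $\hat p$, and then moving all the diagonal matrices $e^{\pm\ri\hat q}$, $\tau_{(x)}$, $\tilde\tau_{(x)}$ and powers of $e^{\hat p}$ past one another (they all commute) reproduces $e^{-\ri\hat q}\tilde\tau_{(x)}\alpha(x,u,v,\hat p,e^{\ri\hat q})\tau_{(x)}^{-1}$ with $\alpha$ as in \eqref{S28}; this is the desired relation, and it is also the bookkeeping that will make the element $K$ assembled from the hatted data in Theorem \ref{thm:4.9} coincide, up to a residual torus gauge transformation, with $K(\hat p,e^{\ri\hat q})$. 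The main obstacle is entirely one of bookkeeping rather than of ideas: one must keep the index shifts in the phase monomials consistent across the generic off-diagonal entries $\hat\theta(x,z)_{jk}$, where the factor $\sgn(k-j-1)$ and the summation range $\min(j,k),\dots,\max(j,k)-1$ both play a role, and one must also verify the elementary but laborious algebraic identity that rewrites the product formula \eqref{S19} for $\theta(x,\hat p)_{jk}$ as the single quotient appearing in \eqref{G30} — this rests on a telescoping identity among the functions $Q_{j\ell}$ together with the definition \eqref{S21} of $r(x,\hat p)$. Once these are organized carefully there is no conceptual difficulty, which is why only an indication of the computation is warranted.
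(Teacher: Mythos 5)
Your proposal is correct and coincides with the paper's intended argument: the paper offers no proof beyond the assertion that the identities ``can be verified by straightforward inspection,'' supplemented by the single auxiliary formula \eqref{G33} expressing $r(x,\hat p(z))$ through the $Q_{jk}$, and your modulus/phase bookkeeping (based on $z_\ell\bar z_\ell=\hat p_\ell-\hat p_{\ell+1}-|x|/2$, the telescoping sum giving $\hat p_j-\hat p_k$, and the compatibility of the quadratic recursion in \eqref{G29} with conjugation by $\tau_{(x)}$ thanks to $(\tau_{(x)})_a=1$) is exactly that inspection, carried out in an organized way. The derivation of the $\hat\alpha$ identity from the $\hat\theta$ and $\hat\gamma$ ones by commuting diagonal matrices is likewise the intended route.
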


For the verification of the above identities, we remark that the vector $r$
\eqref{S21} can be expressed as a smooth function of the complex variables as
\be
r(x,\hat p(z))_j=\sqrt{\frac{\sinh(\frac{x}{2})}{\sinh(\frac{nx}{2})}}
\prod_{\substack{k=1\\(k\neq j)}}^nQ_{jk}(x,z),\quad j=1,\dots,n.
\label{G33}
\ee

With all necessary preparations now done, we state the main new result of the paper.

\begin{theorem}\label{thm:4.9}
The image of the smooth map
$\hat K\colon\hat M_c\to\SL(2n,\C)$ given by the formula
\be
\hat K(z)=\begin{bmatrix}\kappa(x)\hat\zeta(x,z)^{-1} &\0_n\\\0_n&\1_n\end{bmatrix}
\begin{bmatrix}\hat\gamma(x,z)&\ri e^{\hat p(z)}\\\ri e^{\hat p(z)}&
\hat\gamma(x,z)^\dag\end{bmatrix}
\begin{bmatrix}e^{-v}\1_n&\hat\alpha(x,u,v,z)\\\0_n&e^v\1_n\end{bmatrix}
\label{G46}
\ee
lies in $\Phi_+^{-1}(\mu)$, intersects every gauge orbit in precisely one point, and $\hat K$ is injective.
The pull-back of the Alekseev-Malkin 2-form $\omega$ \eqref{T5} by $\hat K$ is
$\hat\omega_c$ \eqref{hatom}. Consequently, $\pi_\mu\circ\hat K\colon\hat M_c\to M$
is a symplectomorphism, whereby $(\hat M_c,\hat\omega_c)$ provides a model of the
reduced phase space $(M,\omega_M)$ defined in Subsection \ref{subsec:2.2}.
\end{theorem}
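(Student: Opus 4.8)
The plan is to verify the three clusters of claims in turn: (i) $\hat K(z)\in\Phi_+^{-1}(\mu)$ for every $z\in\hat M_c$; (ii) $\hat K$ is injective and meets every gauge orbit exactly once; (iii) $\hat K^\ast(\omega)=\hat\omega_c$. The engine for (i) and (ii) is the bridge provided by Lemma~\ref{lem:4.8}: the ``hatted'' matrices are, on the dense set $z=z(\hat p,e^{\ri\hat q})$ with $(\hat p,e^{\ri\hat q})\in\cC_x\times\T_n$, just conjugates of the matrices $\zeta,\theta,\gamma,\alpha$ of Section~\ref{sec:3} by the diagonal unitaries $\tau_{(x)},\tilde\tau_{(x)}$. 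So on $\hat M_c^o$ one computes $\hat K(z(\hat p,e^{\ri\hat q}))$ by substituting the four identities of Lemma~\ref{lem:4.8} into \eqref{G46}; the $\tau$-factors telescope through the block matrices and one is left with $\hat K(z(\hat p,e^{\ri\hat q}))=\eta_L\,K(\hat p,e^{\ri\hat q})\,\eta_R^{-1}$ for suitable diagonal $(\eta_L,\eta_R)\in G_\mu$ — explicitly the $\eta$'s built from $\tau_{(x)},\tilde\tau_{(x)}$ and $e^{\ri\hat q}$. Since $K(\hat p,e^{\ri\hat q})\in\Phi_+^{-1}(\mu)$ by Proposition~\ref{prop:3.3} and $\Phi_+^{-1}(\mu)$ is $G_\mu$-invariant, this proves (i) on the dense open set $\hat M_c^o$; because $\Phi_+^{-1}(\mu)$ is a closed embedded submanifold (Theorem~\ref{thm:4.4}) and $\hat K$ is continuous (indeed smooth, as all the constituents are smooth on all of $\hat M_c$ by the explicit formulae of Definition~\ref{def:4.7}), (i) extends to the boundary $\hat M_c\setminus\hat M_c^o$ by density.

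For (ii), the relation $\hat K(z(\hat p,e^{\ri\hat q}))\in G_\mu\cdot K(\hat p,e^{\ri\hat q})$ already shows that $\hat K$ maps into the union of gauge orbits meeting $S$, and Proposition~\ref{prop:3.3} says every orbit does meet $S$; hence $\hat K$ is onto the set of orbits on the dense set, and then onto all orbits by density and the closedness of orbits (which follows from freeness of the $\bar G_\mu$-action, Proposition~\ref{prop:4.3}, plus compactness of $\bar G_\mu$). To see that distinct $z$ give distinct orbits, note first that the $\hat p$-coordinates $\hat p_k(z)$ in \eqref{G26} are recoverable from $\hat K(z)$ via the map $\cP$ of \eqref{S42} (they are the $\hat p$-invariants of $g_L$), so it suffices to work at fixed $\hat p$. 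On $\hat M_c^o$ injectivity then reduces, via Lemma~\ref{lem:4.5} and the fact that $\cZ_x$ is a diffeomorphism onto $\hat M_c^o$, to the observation that $z$ and $(\hat p,e^{\ri\hat q})$ determine each other. The remaining point is the boundary: for $z$ with some $z_j=0$ one must check directly that the ambiguity in the gauge transformation (the diagonal matrices $m_1,m_2$ of Lemma~\ref{lem:4.1}, which for $q_1=\pi/2$ need not be scalar) is exactly absorbed so that $\hat K(z)=\hat K(z')$ in the same orbit forces $z=z'$; here the design of \eqref{G25}–\eqref{G26} — that the phases of the $z_j$ encode the $\hat q$'s in a way that degenerates precisely where $\bar\cC_x$ has extra isotropy — is what makes it work, and this is the step singled out in Remark~\ref{rem:3.5}. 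Injectivity of $\hat K$ as a map (not just on orbits) then follows since within a single orbit $S$ already hits it once (Proposition~\ref{prop:3.3} together with Proposition~\ref{prop:4.3}) and $\hat K$'s image is contained in that orbit-by-orbit picture.

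For (iii), the cheap route is again density: $\hat M_c^o$ is dense in $\hat M_c$, $\cZ_x\colon\cC_x\times\T_n\to\hat M_c^o$ is a symplectomorphism with $\cZ_x^\ast\hat\omega_c=\hat\omega$ by \eqref{densehatom}, and on $\cC_x\times\T_n$ we have $\hat K\circ\cZ_x=\pi_\mu$-equivalent to $K(\hat p,e^{\ri\hat q})$ up to gauge, so $\iota_\mu\circ(\hat K\circ\cZ_x)$ has the same pullback of $\omega$ as $\iota_\mu\circ S^o$, which by Theorem~\ref{thm:4.6} is $\hat\omega$; hence $(\hat K)^\ast\omega$ and $\hat\omega_c$ are two smooth $2$-forms on $\hat M_c$ agreeing on the dense open $\hat M_c^o$, therefore equal everywhere. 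The final sentence is then automatic: $\hat K$ is an injective immersion (the pullback of $\omega$ being nondegenerate $\hat\omega_c$ forces $d\hat K$ injective) with image a cross-section of the orbits, so $\pi_\mu\circ\hat K$ is a bijective local symplectomorphism, hence a symplectomorphism onto $M$.

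\textbf{Main obstacle.} The routine parts are the bookkeeping with the $\tau$-conjugations and the density arguments. The genuine difficulty is the boundary behaviour in (ii): proving that $\hat K$ remains injective and orbit-injective on the locus $\prod z_j=0$, where the torus parametrization collapses and the gauge isotropy (through the non-scalar $m_1\neq m_2$ allowed by Lemma~\ref{lem:4.1} when $q_1=\pi/2$) is larger. One must show the chosen functions $\hat\zeta,\hat\theta,\hat\gamma,\hat\alpha$ are genuinely smooth there — using $J(y)=\sqrt{\sinh(y)/y}$ to tame the $1/\sinh$ singularities — and that the residual gauge freedom is spent exactly on making $\hat K$ well-defined rather than leaving extra coincidences. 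Conceptually this is the content of Remark~\ref{rem:4.10}; technically it is a case analysis of which $z_j$ vanish, and it is where the construction's cleverness is concentrated.
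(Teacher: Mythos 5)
Your strategy is the paper's own: use Lemma \ref{lem:4.8} to exhibit $\hat K(z(\hat p,e^{\ri\hat q}))$ as a gauge transform of $K(\hat p,e^{\ri\hat q})$, deduce membership in $\Phi_+^{-1}(\mu)$ and orbit-surjectivity from Proposition \ref{prop:3.3}, and obtain $\hat K^\ast(\omega)=\hat\omega_c$ from Theorem \ref{thm:4.6} by smoothness and density. Two remarks on the parts you treat as routine. First, Lemma \ref{lem:4.8} holds for \emph{all} $(\hat p,e^{\ri\hat q})\in\bar\cC_x\times\T_n$, and $\hat\cZ_x$ maps $\bar\cC_x\times\T_n$ \emph{onto} $\hat M_c$; so the gauge-equivalence identity is valid everywhere and both the containment in the constraint surface and the fact that the image meets every orbit follow immediately. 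Your detour through ``density and closedness of orbits'' is unnecessary and, as written, incomplete: meeting a dense set of orbits does not give all orbits unless you extract a convergent subsequence of gauge transformations (compactness of $G_\mu$), which you do not do. Second, you must also verify that the conjugating pair actually lies in $G_\mu$; this requires the check $\kappa(x)\tau_{(x)}\kappa(x)^{-1}\hat v(x)=\hat v(x)$ and is not automatic from the block-diagonal form.

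The genuine gap is precisely the point you label the ``main obstacle'': that $\hat K$ is injective and its image meets each orbit exactly once on the locus $\prod_jz_j=0$. Writing ``one must check directly'' and gesturing at Remarks \ref{rem:3.5} and \ref{rem:4.10} is not a proof, and this is the main new content of the theorem (everything else is assembled from earlier results). The argument you are missing is concrete and uniform, not a case analysis over which $z_j$ vanish: one factors the middle block of $\hat K(z)$ as in \eqref{G53}, so that Iwasawa uniqueness and Lemma \ref{lem:4.1} turn $\hat K(z')=\eta_L\hat K(z)\eta_R^{-1}$ into $\hat p(z')=\hat p(z)$ together with $\hat\alpha(z')=m_2\hat\alpha(z)m_1^{-1}$ for diagonal unitaries $m_1,m_2$ of the restricted form allowed by that lemma. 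The decisive observation (say for $x>0$) is that the entries $\hat\alpha_{k+1,k}$ and $\hat\alpha_{1,n}$ depend only on $\hat p(z)$ and never vanish; this forces $m_1=m_2=C\1_n$, hence $\hat\alpha(z')=\hat\alpha(z)$, and then the entries $(1,2),\dots,(1,n-1)$, $(2,n)$ and $(1,1)$ successively recover $z'_2=z_2,\dots,z'_{n-1}=z_{n-1}$, $z'_1=z_1$ and $z'_n=z_n$. Without this computation (or an equivalent), the ``precisely one point'' and injectivity claims remain unproven exactly where they are nontrivial.
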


\begin{proof}
The proof is based upon the identity
\be
\hat K(z(\hat p,e^{\ri\hat q}))
=\begin{bmatrix}\kappa(x)\tau_{(x)}\kappa(x)^{-1}&\0_n\\
\0_n&\tilde\tau_{(x)}e^{-\ri\hat q}\end{bmatrix}K(\hat p,e^{\ri\hat q})
\begin{bmatrix}\tilde\tau_{(x)}e^{-\ri\hat q}&\0_n\\
\0_n&\tau_{(x)}\end{bmatrix}^{-1}, \quad \forall (\hat p, e^{\ri \hat q}) \in \bar\cC_x \times \T_n,
\label{G47}
\ee
which is readily seen to be equivalent to the set of identities displayed in Lemma
\ref{lem:4.8}. It means that $\hat K(z(\hat p,e^{\ri\hat q}))$ is a gauge transform of
$K(\hat p, e^{\ri \hat q})$ in \eqref{S27}. Indeed, the above transformation of
$K(\hat p,e^{\ri\hat q})$ has the form \eqref{T20} with
\be
\eta_L=c\begin{bmatrix}\kappa(x)\tau_{(x)}\kappa(x)^{-1}&\0_n\\
\0_n&\tilde\tau_{(x)}e^{-\ri\hat q}\end{bmatrix},
\qquad
\eta_R=c\begin{bmatrix}\tilde\tau_{(x)}e^{-\ri\hat q}&\0_n\\
\0_n&\tau_{(x)}\end{bmatrix},
\label{G48}
\ee
where $c$ is a harmless scalar inserted to ensure $\det(\eta_L)=\det(\eta_R)=1$.
Using \eqref{S25} and \eqref{G34}, one can check that $\kappa(x)\tau_{(x)}\kappa(x)^{-1}\hat v(x)=\hat v(x)$
for the vector $\hat v(x)$ in \eqref{S24}, which implies via the relation \eqref{S15}
that $(\eta_L,\eta_R)$ belongs to the isotropy group $G_\mu$ \eqref{T25},
the gauge group acting on $\Phi_+^{-1}(\mu)$.

It follows from Proposition \ref{prop:3.3} and the identity \eqref{G47} that the set
\be
\hat S:=\{\hat K(z)\mid z\in\hat M_c\}
\label{G49}
\ee
lies in $\Phi_+^{-1}(\mu)$ and intersects every gauge orbit. Since the dense subset
\be
\hat S^o:=\{\hat K(z)\mid z\in\hat M_c^o\}
\label{G50}
\ee
is gauge equivalent to $S^o$ in \eqref{S37}, we obtain the equality
\be
\hat K^\ast(\omega)=\hat\omega_c
\label{G51}
\ee
by using Theorem \ref{thm:4.6} and equation \eqref{densehatom}. More precisely, we here also
utilized that $\hat K^\ast(\omega)$ is (obviously) smooth and $\hat M_c^o$ is dense
in $\hat M_c$.

The only statements that remain to be proved are that the intersection of $\hat S$
with any gauge orbit consists of a single point and that $\hat K$ is injective. (These are already clear
for $\hat S^o\subset \hat S$ and for $\hat K\vert_{\hat M^o_c}$.) Now suppose that
\be
\hat K(z')=\begin{bmatrix}\eta_L(1)&\0_n\\\0_n&\eta_L(2)\end{bmatrix}
\hat K(z)\begin{bmatrix}\eta_R(1) &\0_n\\\0_n&\eta_R(2)\end{bmatrix}^{-1}
\label{G52}
\ee
for some gauge transformation and $z,z'\in\hat M_c$.
Let us observe from the definitions that we can write
\be
\begin{bmatrix}\hat\gamma(x,z)&\ri e^{\hat p(z)}\\\ri e^{\hat p(z)}&
\hat\gamma(x,z)^\dag\end{bmatrix}
=D(z)\begin{bmatrix}\cos q(z)&\ri\sin q(z)\\\ri\sin q(z)&\cos q(z)\end{bmatrix}D(z),
\label{G53}
\ee
where $\sin q(z)=e^{\hat p(z)}$, with $\frac{\pi}{2}\geq q_1>\dots>q_n>0$,
and $D(z)$ is a diagonal unitary matrix of the form
$D(z)=\diag(d_1,\1_{n-1},\bar d_1,\1_{n-1})$.
Then the uniqueness properties of the Iwasawa decomposition of $\SL(2n,\C)$ and the
generalized Cartan decomposition \eqref{S1} of $\SU(2n)$ allow to establish the
following consequences of \eqref{G52}. First,
\be
\hat p(z)=\hat p(z').
\label{G54}
\ee
Second, using Lemma \ref{lem:4.1},
\be
\begin{bmatrix}\eta_R(1)&\0_n\\\0_n&\eta_R(2)\end{bmatrix}
=\begin{bmatrix}m_2&\0_n\\\0_n&m_1\end{bmatrix}
\label{G55}
\ee
for some diagonal unitary matrices of the form \eqref{*}. Third, we have
\be
\hat\alpha(z')=\eta_R(1)\hat\alpha(z)\eta_R(2)^{-1}=m_2\hat\alpha(z)m_1^{-1}.
\label{G56}
\ee
For definiteness, let us focus on the case $x>0$. Then we see from the definitions
that the components $\hat \alpha_{k+1,k}$ and $\hat\alpha_{1,n}$ depend only on
$\hat p(z)$ and are non-zero. By using this, we find from \eqref{G56} that
$m_1=m_2=C\1_n$ with a scalar $C$, and therefore
\be
\hat\alpha(z')=\hat\alpha(z).
\label{G57}
\ee
Inspection of the components $(1,2),\dots,(1,n-1)$ of this matrix equality and
\eqref{G54} permit to conclude that $z'_2=z_2,\dots,z'_{n-1}=z_{n-1}$, respectively.
Then, the equality of the $(2,n)$ entries in \eqref{G57} gives $z'_1=z_1$ which
used in the $(1,1)$ position implies $z'_n=z_n$. Thus we see that $z'=z$ and
the proof is complete. (Everything written below \eqref{G56} is quite similar for
$x<0$.)
\end{proof}

\begin{remark}\label{rem:4.10}
Let us hint at the way the global structure was found. The first point to notice was
that all or some of the phases $e^{\ri \hat q_j}$ cannot encode gauge invariant
quantities if $\hat p$ belongs to the boundary of $\bar \cC_x$, as was already
mentioned in Remark \ref{rem:3.5}. Motivated by \cite{FK1}, then we searched for
complex variables by requiring that a suitable gauge transform of
$K(\hat p,e^{\ri \hat q})$ in \eqref{S27} should be expressible as a smooth function
of those variables. Given the similarities to \cite{FK1}, only the definition of
$z_n$ was a true open question. After trial and error, the idea came in a flash that
the gauge transformation at issue should be constructed from a transformation that
appears in Lemma \ref{lem:C.1}. Then it was not difficult to find the correct result.
\end{remark}

\begin{remark}\label{rem:4.11}
Let us elaborate on how the trajectories $\hat p(t)$ corresponding to the flows of the reduced
free Hamiltonians, arising from $\cH_k$ \eqref{F21} for $k=1,\ldots, n$, can be obtained.
Recall that for $k=1$ the reduction of $\cH_1$ completes the main Hamiltonian $H$ \eqref{I1}.
Since $\cH_k(K)= h_k(b_R)$ with $h_k(b) = \frac{1}{2k}\tr (b^\dag b)^k$, the free flow
generated by $\cH_k$ through the initial value $K(0) = g_L(0) b_R^{-1}(0)$ is given by
\eqref{T10} with $d^R h_k(b) = \ri [(b^\dagger b)^k- \frac{1}{2n}\tr(b^\dagger b)^k \1_{2n}]$.
Thus  the curve $g_L(t)$ \eqref{T10} has the form
\begin{equation}
g_L(t)=g_L(0)\exp\!\big(-\ri t\big[\cL(0)^k- \frac{1}{2n}\tr (\cL(0)^k) \1_{2n}\big]\big)
\quad\text{with}\quad
\cL(0)= b_R(0)^\dag b_R(0).
\label{}
\end{equation}
The reduced flow results by the usual projection algorithm.
This starts by picking an initial value $z(0) \in \hat M_c$ and setting
$K(0)=\hat K(z(0))$
by applying \eqref{G46}, which directly determines $g_L(0)$ and $b_R(0)$ as well.
Then the map $\cP$ \eqref{S42} gives rise to $\hat p(t)$ via the decomposition of
$g_L(t)\in\SU(2n)$ as displayed in \eqref{S1}, that is
\begin{equation}
\hat p(t)=\cP(K(t)).
\label{}
\end{equation}
More explicitly, if $\cD(t)$ stands for the (11) block of $g_L(t)$, then the eigenvalues
of $\cD(t)\cD(t)^\dag$  are
\begin{equation}
\sigma(\cD(t)\cD(t)^\dag)=\{\cos^2q_j(t)\mid j=1,\dots,n\},
\label{}
\end{equation}
from which $\hat p_j(t)$ can be obtained using \eqref{S13}.
In particular,  the `particle positions' evolve according to an `eigenvalue dynamics'  similarly to other
many-body systems. This involves the one-parameter group
$\exp\!\big(-\ri t  \big[\cL(0)^k- \frac{1}{2n}\tr (\cL(0)^k) \1_{2n} \big] \big)$, where $ \cL(0)$
is the initial value of the Lax matrix (cf. \eqref{F22})
\begin{equation}
\cL(z)=\begin{bmatrix}
e^{2v}\1_n&-e^v\hat\alpha(z)\\
-e^v\hat\alpha(z)^\dag&(e^{-2v}\1_n+\hat\alpha(z)^\dag\hat\alpha(z))
\end{bmatrix},
\label{4.61}
\end{equation}
where we suppressed the dependence of $\hat \alpha$ \eqref{hatal} on the parameters $x,u,v$.
A more detailed characterization of the dynamics will be provided elsewhere.
\end{remark}

\section{Discussion and outlook on open problems}
\label{sec:5}

In this paper we derived a deformation of the trigonometric $\BC_n$ Sutherland system
by means of Hamiltonian reduction of a free system on the Heisenberg double of
$\SU(2n)$. Our main result is the global characterization of the reduced phase space
given by Theorem \ref{thm:4.9}. The Liouville integrability of our system holds on
this phase space, wherein the reduced free flows are complete. These flows can be
obtained by the usual projection method applied to the original free flows described
in Section \ref{sec:2}.

The local form of our reduced `main Hamiltonian' \eqref{I1} is similar to the
Hamiltonian derived in \cite{M}, which deforms the hyperbolic $\BC_n$ Sutherland
system. However, besides a sign difference corresponding to the difference of the
undeformed Hamiltonians, the local domain of our system, $\cC_x\times\T_n$ in
\eqref{I3}, is different from the local domain appearing in \cite{M}, which in effect
has the form $\cC_x'\times\T_n$ with the open polyhedron\footnote{The notational
correspondence between \cite{M} and the present paper is:
$(q,p,\alpha,x,y)\leftrightarrow(\hat p,\hat q,e^{-\frac{x}{2}},e^{-v},e^{-u})$.}
\be
\cC'_x:=\{\hat p\in\R^n\mid  \hat p_k-\hat p_{k+1}>|x|/2\
(k=1,\dots,n-1)\}.
\label{D1}
\ee
We here wish to point out that $\cC_x'\times\T_n$ is \emph{not} the full reduced
phase space that arises from the reduction considered in \cite{M}. In fact, similarly
to our case, the constraint surface contains a submanifold of the form
$\bar\cC_x'\times\T_n$ in the case of \cite{M}, where $\bar\cC_x'$ is the closure of
$\cC_x'$. Then a global model of the reduced phase space can be constructed by
introducing complex variables suitably accommodating the procedure that we utilized
in Subsection \ref{subsec:4.4}. Erroneously, in \cite{M} the full phase space was
claimed to be $\cC_x'\times\T_n$; the details of the correct description will be
presented elsewhere.

Throughout the text we assumed that $n>1$, but we now note that the reduced system
can be specialized to $n=1$ and the reduction procedure works in this case as well.
The assumption was made merely to save words. The formalism actually simplifies for
$n=1$ since the Poisson structure on $G_+=\mathrm{S}(\UN(1)\times\UN(1))<\SU(2)$ is
trivial.

As explained in Appendix \ref{sec:A}, the Hamiltonian \eqref{I1} is a singular limit
of a specialization of the trigonometric van Diejen Hamiltonian \cite{vD}, which (in
addition to the deformation parameter) contains 5 coupling constants. As a result,
at least classically, van Diejen's system can be degenerated into the trigonometric
$\BC_n$ Sutherland system either directly, as described in \cite{vD}, or in a
roundabout way, going through our system. Of course, a similar statement holds in
relation to hyperbolic $\BC_n$ Sutherland and the system of \cite{M}.

Except in the rational limit \cite{P1}, no Lax matrix is known that would generate van
Diejen's commuting Hamiltonians. In the reduction approach a Lax matrix arises
automatically, in our case it features in equations \eqref{F22} and \eqref{4.61}.
This might be helpful in searching for a Lax matrix behind van Diejen's 5-coupling Hamiltonian.
The search would be easy if one could derive van Diejen's system by Hamiltonian reduction.
It is a long standing open problem to find such derivation. Perhaps one should
consider some `classical analogue' of the quantum group interpretation of the
Koornwinder ($\BC_n$ Macdonald) polynomials found in \cite{OS}, since those
polynomials diagonalize van Diejen's quantized Hamiltonians \cite{vD3}.

Another open problem is to construct action-angle duals of the deformed $\BC_n$
Sutherland systems. Duality relations are not only intriguing on their own right,
but are also very useful for extracting information about the dynamics
\cite{Ruij88,RIMS95,RuijR,P2}.
The duality was used in \cite{AFG,FG1} to show that the hyperbolic $\BC_n$
Sutherland system is maximally superintegrable, the trigonometric $\BC_n$ Sutherland
system has precisely $n$ constants of motion, and the relevant dual systems are
maximally superintegrable in both cases. These studies, which were heavily influenced
by Pusztai's paper \cite{P1} (see also \cite{FG2}), may provide inspiration for a
future investigation of the dualities for the deformed $\BC_n$ Sutherland systems.
We here only remark that deformed dual systems should arise from considering the
reduction of alternative sets of commuting free Hamiltonians on the pertinent
Heisenberg doubles.

After we finished our work, there appeared a preprint \cite{vDE} dealing with the
quantum mechanics of a lattice version of a 4-parameter Inozemtsev type limit of
van Diejen's trigonometric/hyperbolic system. The systems studied in \cite{M} and
in our paper correspond to further limits of specializations of this one. The
statements about quantum mechanical dualities contained in \cite{vDE} and its
references should be related to classical dualities.

We hope be able to return to some of these questions in the future.

\begin{acknowledgements}
L.F. wishes to thank S. Ruijsenaars for suggesting the role of the kind of singular
limit described in Appendix \ref{sec:A}. This work was supported in part by the
Hungarian Scientific Research Fund (OTKA) under the grant K-111697. The work was
also partially supported by the European Union and the European Social Fund
through the project TAMOP-4.2.2.D-15/1 at the University of Szeged.
\end{acknowledgements}

\appendix

\section{Links to systems of van Diejen and Schneider}
\label{sec:A}

Recall that the trigonometric $\BC_n$ van Diejen system \cite{vD} has the Hamiltonian
\be
H_{\text{vD}}(\lambda,\theta)=\sum_{j=1}^n\big(\cosh(\theta_j)
\sV_j(\lambda)^{1/2}\sV_{-j}(\lambda)^{1/2}-[\sV_j(\lambda)+\sV_{-j}(\lambda)]/2\big),
\label{A.1}
\ee
with $\sV_{\pm j}$ ($j=1,\dots,n$) defined by
\be
\sV_{\pm j}(\lambda)=\sw(\pm\lambda_j)\prod_{\substack{k=1\\(k\neq j)}}^n
\sv(\pm\lambda_j+\lambda_k)\sv(\pm\lambda_j-\lambda_k),
\label{A.2}
\ee
and $\sv,\sw$ denoting the trigonometric potentials
\be
\sv(z)=\frac{\sin(\mu+z)}{\sin(z)}\quad\text{and}\quad
\sw(z)=\frac{\sin(\mu_0+z)}{\sin(z)}
\frac{\cos(\mu_1+z)}{\cos(z)}
\frac{\sin(\mu'_0+z)}{\sin(z)}
\frac{\cos(\mu'_1+z)}{\cos(z)},
\label{A.3}
\ee
where $\mu,\mu_0,\mu_1,\mu'_0,\mu'_1$ are arbitrary parameters.
By making the substitutions
\be
\begin{gathered}\lambda_j\to\ri(\hat p_j+R),\\\theta_j\to\ri\hat q_j,\end{gathered}
\quad\forall j\quad\text{and}\quad
\mu\to\ri g/2,\quad
\begin{gathered}\mu_0\to\ri(g_0+R),\\\mu'_0\to\ri(g'_0-R),\end{gathered}\quad
\begin{gathered}\mu_1\to\ri g_1+\pi/2,\\\mu'_1\to\ri g'_1+\pi/2\end{gathered}
\label{A.4}
\ee
the potentials become hyperbolic functions and their $R\to\infty$ limit exists, namely
\be
\lim_{R\to\infty}\sv(\pm (\lambda_j+\lambda_k))=e^{\pm g/2},\quad
\lim_{R\to\infty}\sv(\pm (\lambda_j-\lambda_k))
=\frac{\sinh(\pm g/2+\hat p_j-\hat p_k)}{\sinh(\hat p_j-\hat p_k)},\quad\forall j,k
\label{A.5}
\ee
and
\be
\lim_{R\to\infty}\sw(\pm\lambda_j)
=e^{g_0-g'_0\pm(g_1+g'_1)-2\hat p_j}-e^{\pm(g_0+g'_0+g_1+g'_1)},\quad\forall j.
\label{A.6}
\ee
In the $1$-particle case we have $V_{\pm}(\lambda)=\sw(\pm\lambda)$,
thus $H_{\text{vD}}$ takes the following form
\be
H_{\text{vD}}(\lambda,\theta)=\cosh(\theta)\sw(\lambda)^{1/2}\sw(-\lambda)^{1/2}
-[\sw(\lambda)+\sw(-\lambda)]/2.
\label{A.7}
\ee
By utilizing \eqref{A.6} one obtains
\be
\begin{gathered}
\lim_{R\to\infty}\sw(\lambda)^{1/2}\sw(-\lambda)^{1/2}
=\big[1-(e^{2g_0}+e^{-2g'_0})e^{-2\hat p}+e^{2g_0-2g'_0-4\hat p}\big]^{1/2},\\
\lim_{R\to\infty}[\sw(\lambda)+\sw(-\lambda)]/2
=\frac{e^{g_0-g'_0+g_1+g'_1}+e^{g_0-g'_0-g_1-g'_1}}{2}e^{-2\hat p}
-\cosh(g_0+g'_0+g_1+g'_1).
\end{gathered}
\label{A.8}
\ee
Equating the $R\to\infty$ limit of $H_{\text{vD}}(\lambda,\theta)$ \eqref{A.7} with
the Hamiltonian $H(\hat p,\hat q;x,u,v)$ \eqref{I1} yields a system of linear equations
involving $g_0,g_1,g'_0,g'_1$ as unknowns and $u,v$ as parameters. Actually, four
sets of linear equations can be constructed, each with infinitely many solutions
depending on one (real) parameter, but these sets are `equivalent' under the
exchanges: $g_0\leftrightarrow g'_0$ or $g_1\leftrightarrow g'_1$.
Therefore it is sufficient to give only one set of solutions, e.g.
\be
g_0=v-u,\quad g_0'=0,\quad g_1=u+v-g'_1,\quad g'_1\in\R.
\label{A.9}
\ee
Setting $g=x$ and $g'_1=0$ provides the following special choice of couplings in
\eqref{A.4}
\be
\mu=\ri x/2,\quad
\mu_0=\ri(v-u+R),\quad
\mu'_0=-\ri R,\quad
\mu_1=\ri(u+v)+\pi/2,\quad
\mu'_1=\pi/2,
\label{A.10}
\ee
and one finds the following
\be
\lim_{R\to\infty}H_{\text{vD}}\big(\lambda(\hat p,R),\theta(\hat q)\big)
=-H(\hat p,\hat q;x,u,v)+\cosh\big(2u\big).
\label{A.11}
\ee
In the $n$-particle case, by using \eqref{A.5} and \eqref{A.6} it can be shown
that with \eqref{A.10} one has
\be
\lim_{R\to\infty}H_{\text{vD}}\big(\lambda(\hat p,R),\theta(\hat q)\big)
=-H(\hat p,\hat q;x,u,v)+\sum_{j=1}^n\cosh\big((j-1)x+2u\big),
\label{A.12}
\ee
i.e., the Hamiltonian $H$ \eqref{I1} is recovered as a singular limit of $H_{\text{vD}}$
\eqref{A.1}.

Consider now the function $H(\hat p,\hat q;x,u,v)$
and introduce the real parameter $\sigma$ through the substitutions
\be
u\to u-\sigma,\quad v\to v-\sigma
\label{A.13}
\ee
and apply the canonical transformation
\be
\hat p_j\to -Q_j+\sigma,\quad \hat q_j\to -P_j,\quad\forall j.
\label{A.14}
\ee
Then we have
\be
\lim_{\sigma\to\infty}H(\hat p(Q,\sigma),\hat q(P),x,u(\sigma),v(\sigma))
=H_{\text{Sch}}(Q,P,x,u),
\label{A.15}
\ee
with Schneider's \cite{S} Hamiltonian
\be
H_{\text{Sch}}(Q,P,x,u)
=\frac{e^{-2u}}{2}\sum_{j=1}^ne^{2Q_j}-\sum_{j=1}^n\cos(P_j)
\prod_{\substack{k=1\\(k\neq j)}}^n
\bigg[1-\frac{\sinh^2\big(\frac{x}{2}\big)}
{\sinh^2(Q_j-Q_k)}\bigg]^{\tfrac{1}{2}}.
\label{A.16}
\ee

\begin{remark}\label{rem:A.1}
(\emph{i}) In \eqref{A.4} only two of the four external field couplings
$\mu_0,\mu_0',\mu_1,\mu_1'$ are scaled with $R$.
However, scaling all four of these parameters also leads to
an integrable Ruijsenaars-Schneider type system with a more
general $4$-parameter external field. For details, see Section
II.B of \cite{vD2}. (\emph{ii}) The connection to Schneider's Hamiltonian
was mentioned in Remark 7.1 of \cite{M} as well, where a singular limit,
similar to \eqref{A.15} was taken.
\end{remark}

\section{Proof of a key result}
\label{sec:B}

In this appendix we prove Proposition \ref{prop:3.2} which states that the range of the
`position variable' $\hat p$ is contained in the closed thick-walled Weyl chamber
$\bar\cC_x$ \eqref{S18}.

\begin{proof}[Proof of Proposition \ref{prop:3.2}]
According to \eqref{S17} the matrices $e^{2\hat p}$ and
$e^{2\hat p-x\1_n}+\sgn(x)e^{\hat p}ww^\dag e^{\hat p}$
are similar and therefore have the same characteristic polynomial.
This gives the identity
\be
\prod_{j=1}^n(e^{2\hat p_j}-\lambda)=\prod_{j=1}^n(e^{2\hat p_j-x}-\lambda)
+\sgn(x)\sum_{j=1}^n\bigg[e^{2\hat p_j}|w_j|^2\prod_{\substack{k=1\\(k\neq j)}}^n
(e^{2\hat p_k-x}-\lambda)\bigg],
\label{B1}
\ee
where $\lambda$ is an arbitrary complex parameter. The constraint on $\hat p$ arises
from the fact that $\vert w_m \vert^2$ $(m=1,\ldots, n)$ must be non-negative and not
all zero because of the definition \eqref{S16}.

Let us assume for a moment that the components of $\hat p$ are distinct such that
$\hat p_1>\dots>\hat p_n$. This enables us to express $|w_m|^2$ for all
$m\in \{1,\dots,n\}$ from the above equation by evaluating it at $n$ different values
of $\lambda$, viz. $\lambda=e^{2\hat p_m-x}$, $m=1,\dots,n$. We obtain the following
\be
|w_m|^2=\sgn(x)(1-e^{-x})\prod_{\substack{j=1\\(j\neq m)}}^n
\frac{e^{2\hat p_j+x}-e^{2\hat p_m}}{e^{2\hat p_j}-e^{2\hat p_m}},
\quad m=1,\dots,n.
\label{B2}
\ee
For $x>0$ and any $\hat p$ with $\hat p_1>\dots>\hat p_n$ the formula \eqref{B2}
implies that $|w_n|^2>0$ and for $m=1,\dots,n-1$ we have $|w_m|^2\geq 0$ if and only
if $\hat p_m-\hat p_{m+1}\geq x/2$. Similarly, if $x<0$ and $\hat p\in\R^n$ with
$\hat p_1>\dots>\hat p_n$, then \eqref{B2} implies $|w_1|^2>0$ and for $m=2,\dots,n$
we have $|w_m|^2\geq 0$ if and only if $\hat p_{m-1}-\hat p_m\geq -x/2$. In summary,
if $\hat p_1>\dots>\hat p_n$, then $|w_m|^2\geq 0$ $\forall m$ implies that
$\hat p\in\bar\cC_x$.

Now, let us prove our assumption, that all components of $\hat p$ must be different.
Indirectly, suppose that some (or maybe all) of the $\hat p_j$'s coincide.
This can be captured by a partition of the positive integer
\be
n=k_1+\dots+k_r,
\label{B3}
\ee
where $r<n$ (or equivalently, at least one integer $k_1,\dots,k_r$ must be
greater than $1$) and the indirect assumption can be written as
\be
\hat p_1=\dots=\hat p_{k_1},\quad
\hat p_{k_1+1}=\dots=\hat p_{k_1+k_2},\quad\dots,\quad
\hat p_{k_1+\dots+k_{r-1}+1}=\dots=\hat p_{k_1+\dots+k_r}\equiv\hat p_n.
\label{B4}
\ee
Then \eqref{B1} can be reformulated as
\be
\prod_{j=1}^r(\Delta_j-\lambda)^{k_j}
=\prod_{j=1}^r(\Delta_je^{-x}-\lambda)^{k_j}
+\sgn(x)\sum_{m=1}^rZ_m\Delta_m(\Delta_me^{-x}-\lambda)^{k_m-1}
\prod_{\substack{j=1\\(j\neq m)}}^r(\Delta_je^{-x}-\lambda)^{k_j},
\label{B5}
\ee
where we introduced $r$ distinct variables
\be
\Delta_1=e^{2\hat p_{k_1}},\quad
\Delta_2=e^{2\hat p_{k_1+k_2}},\quad\dots,\quad
\Delta_r=e^{2\hat p_{k_1+\dots+k_r}}\equiv e^{2\hat p_n},
\label{B6}
\ee
and $r$ non-negative real variables
\be
\begin{gathered}
Z_1=|w_1|^2+\dots+|w_{k_1}|^2,\quad
Z_2=|w_{k_1+1}|^2+\dots+|w_{k_1+k_2}|^2,\\
\dots,\quad Z_r=|w_{k_1+\dots+k_{r-1}+1}|^2+\dots+|w_n|^2.
\end{gathered}
\label{B7}
\ee
Notice that $Z_1+\dots+Z_r=|w|^2=\sgn(x)e^{-x}(e^{nx}-1)>0$, therefore at least one
of the $Z_j$'s must be positive. Next, we define the rational function of $\lambda$
\be
Q(\Delta,x,\lambda)=\prod_{j=1}^r\frac{(\Delta_j-\lambda)^{k_j}}
{(\Delta_je^{-x}-\lambda)^{k_j-1}},
\label{B8}
\ee
and use it to rewrite \eqref{B5} as
\be
Q(\Delta,x,\lambda)=\prod_{j=1}^r(\Delta_je^{-x}-\lambda)
+\sgn(x)\sum_{m=1}^rZ_m\Delta_m
\prod_{\substack{j=1\\(j\neq m)}}^r(\Delta_je^{-x}-\lambda).
\label{B9}
\ee
The above equation implies that all poles of $Q$ are apparent, i.e., there must
be cancelling factors in its numerator. This observation has a straightforward
implication on the $\Delta$'s.
\begin{center}
($\ast$)\quad For every index $m\in\{1,\dots,r\}$ with $k_m>1$,
there exists an index $s\in\{1,\dots,r\}$ s.t. $\Delta_s=\Delta_me^{-x}$
and $k_s\geq k_m-1$.
\end{center}
The quantities $Z_m=Z_m(\Delta,x)$ can be uniquely determined by evaluating
\eqref{B9} at $r$ different values of the parameter $\lambda$, namely
$\lambda_m=\Delta_me^{-x}$ ($m=1,\dots,r$). However, there are $3$ disjoint
cases which are to be handled separately.

\noindent
\underline{Case 1:} $k_m=1$ and $\nexists s\in\{1,\dots,r\}$:
$\Delta_s=\Delta_me^{-x}$. Then we find
\be
Z_m=\sgn(x)(1-e^{-x})e^{(n-1)x}\prod_{\substack{j=1\\(j\neq m)}}^r
\bigg(\frac{\Delta_j-\Delta_me^{-x}}{\Delta_j-\Delta_m}\bigg)^{k_j}>0.
\label{B10}
\ee
\underline{Case 2:} $k_m>1$ and $k_s=k_m-1$. Then we find
\be
Z_m=(-1)^{k_m+1}\sgn(x)(1-e^{-x})e^{(n-k_m)x}\prod_{\substack{j=1\\(j\neq m,s)}}^r
\bigg(\frac{\Delta_j-\Delta_me^{-x}}{\Delta_j-\Delta_m}\bigg)^{k_j}>0.
\label{B11}
\ee
\underline{Case 3:} $k_m=1$ and $\exists s\in\{1,\dots,r\}$:
$\Delta_s=\Delta_me^{-x}$ or $k_m>1$ and $k_s>k_m-1$. Then we get
\be
Z_m=0.
\label{B12}
\ee
Since there is at least one $Z_m$ which is positive, the set of indices belonging to
Case 1 or Case 2 must be non-empty. Introduce a real positive parameter $\varepsilon$
and associate to every degenerate configuration \eqref{B4} a continuous
family of configurations, denoted by $\hat p(\varepsilon)$, with components
$\hat p(\varepsilon)_1,\dots,\hat p(\varepsilon)_n$ defined by the formulae
\be
\begin{gathered}
\exp(2\hat p(\varepsilon)_a+a\varepsilon)=\Delta_1,\quad a=1,\dots,k_1,\\
\exp(2\hat p(\varepsilon)_{\sum_{m=1}^{j-1}k_m+a}+a\varepsilon)=\Delta_j,
\quad a=1,\dots,k_j,\quad j=2,\dots,r.
\end{gathered}
\label{B13}
\ee
This way coinciding components of $\hat p$ \eqref{B4} are `pulled apart' to points
successively separated by $\varepsilon/2$. It is clear that with sufficiently small
separation the configuration $\hat p(\varepsilon)$ sits in the chamber
$\{\hat x\in\R^n\mid 0>\hat x_1>\dots>\hat x_n\}$. For such non-degenerate
configurations $\hat p(\varepsilon)$, let us consider the expressions
\be
|w_\ell(\hat p(\varepsilon),x)|^2=\sgn(x)(1-e^{-x})
\prod_{\substack{j=1\\(j\neq\ell)}}^n
\frac{e^{2\hat p(\varepsilon)_j+x}-e^{2\hat p(\varepsilon)_\ell}}
{e^{2\hat p(\varepsilon)_j}-e^{2\hat p(\varepsilon)_\ell}},\quad\ell=1,\dots,n,
\label{B14}
\ee
which give the unique solution of equation \eqref{B1} at $\hat p(\varepsilon)$.
The limits $\lim_{\varepsilon\to 0}|w_\ell(\hat p(\varepsilon),x)|^2$ exist,
and do not vanish for $\ell=k_1+\dots+k_m$ if $k_m$ belongs to Case 1 or Case 2.
For such $\ell=k_1+\dots+k_m$ we must have
\be
\lim_{\varepsilon\to 0}|w_{k_1+\dots+k_m}(\hat p(\varepsilon),x)|^2=Z_m(\Delta,x)>0,
\label{B15}
\ee
where $Z_m$ is given by \eqref{B10} in Case 1 and by \eqref{B11} in Case 2.
It can be also seen that
\be
|w_\ell(\hat p(\varepsilon),x)|^2\equiv 0
\quad\Longleftrightarrow\quad
\begin{cases}
\ell\notin\{k_1,k_1+k_2,\dots,k_1+\dots+k_r\}\\
\text{or}\\
\ell=k_1+\dots+k_m\ \text{with}\ k_m\ \text{from Case 3},
\end{cases}
\label{B16}
\ee
i.e., $|w_\ell(\hat p(\varepsilon),x)|^2$ vanishes identically except for the
components in \eqref{B15}. Notice that for a small enough $\varepsilon$ some
coordinates of $\hat p(\varepsilon)$ are separated by less than $|x|/2$. Thus,
as it was shown at beginning the proof, we have $|w_\ell(\hat p(\varepsilon),x)|^2<0$
for some index $\ell$, which might depend on $\varepsilon$. Moreover, \eqref{B16}
implies that the index in question must have the form
$\ell=k_1+\dots+k_{m^\ast}$ for some $m^\ast$ appearing in \eqref{B15}.
But since the number of indices is finite, a monotonically decreasing sequence
$\{\varepsilon_N\}_{N=1}^\infty$ tending to zero can be chosen such that
$|w_{k_1+\dots+k_{m^\ast}}(\hat p(\varepsilon_N),x)|^2<0$ for all $N$.
This together with \eqref{B16} gives the contradiction
\be
0\geq \lim_{N\to\infty}|w_{k_1+\dots+k_{m^\ast}}(\hat p(\varepsilon_N),x)|^2
=Z_{m^\ast}(\Delta,x) >0
\label{B17}
\ee
proving that all components of $\hat p$ must be distinct. This concludes the proof.
\end{proof}

The above proof is a straightforward adaptation of the proofs of Lemma 5.2 of \cite{FK1}
and Theorem 2 of \cite{FK2}. We presented it since it could be awkward to extract the
arguments from those lengthy papers, and also our notations and the ranges of our
variables are different.

\section{Proof of an elementary lemma}
\label{sec:C}

We here prove the following equivalent formulation of Lemma \ref{lem:4.1}.

\begin{lemma}\label{lem:C.1}
Suppose that $\frac{\pi}{2}\geq q_1>\dots>q_n>0$ and
\be
\begin{bmatrix}\eta_L(1)&\0_n\\\0_n&\eta_L(2)\end{bmatrix}
\begin{bmatrix}\cos q&\ri\sin q\\\ri\sin q&\cos q\end{bmatrix}
\begin{bmatrix}\eta_R(1)^{-1} &\0_n\\\0_n&\eta_R(2)^{-1}\end{bmatrix}
=\begin{bmatrix}\cos q&\ri\sin q\\\ri\sin q&\cos q\end{bmatrix}
\label{C1}
\ee
for $\eta_L,\eta_R\in G_+$. Then
\be
\eta_L(1)=\eta_R(2)=m_1,\quad\eta_L(2)=\eta_R(1)=m_2
\label{C2}
\ee
with some diagonal matrices $m_1,m_2\in\T_n$ having the form
\be
m_1=\diag(a,\xi),\quad m_2=\diag(b,\xi),\quad\xi\in\T_{n-1},\
a,b\in\T_1,\quad\det(m_1 m_2)=1.
\label{C3}
\ee
If in addition $\frac{\pi}{2}>q_1$, then $m_1=m_2$.
\end{lemma}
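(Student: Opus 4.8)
The plan is to read off \eqref{C1} block by block and reduce everything to a single commutation argument. Writing $C:=\cos q$ and $S:=\sin q$ (diagonal matrices), comparison of the $(1,1)$, $(1,2)$, $(2,1)$, $(2,2)$ blocks of \eqref{C1} gives the four identities
\[
\eta_L(1)C=C\eta_R(1),\qquad \eta_L(1)S=S\eta_R(2),\qquad \eta_L(2)S=S\eta_R(1),\qquad \eta_L(2)C=C\eta_R(2),
\]
where the factors $\ri$ in the off-diagonal blocks cancel. Since $q_n>0$, the matrix $S$ is invertible, so the two middle identities read $\eta_R(2)=S^{-1}\eta_L(1)S$ and $\eta_R(1)=S^{-1}\eta_L(2)S$; thus it will suffice to pin down $\eta_L(1)$ and $\eta_L(2)$.

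First I would dispose of the generic case $\tfrac{\pi}{2}>q_1$, where $C$ is invertible as well. Then the first and last identities also give $\eta_R(1)=C^{-1}\eta_L(1)C$ and $\eta_R(2)=C^{-1}\eta_L(2)C$, and eliminating $\eta_R(1),\eta_R(2)$ yields the two expressions $\eta_L(2)=CS^{-1}\,\eta_L(1)\,SC^{-1}$ and $\eta_L(2)=SC^{-1}\,\eta_L(1)\,CS^{-1}$. Setting $D:=CS^{-1}=\diag(\cot q_1,\dots,\cot q_n)$, these say $D\eta_L(1)D^{-1}=D^{-1}\eta_L(1)D$, i.e. $\eta_L(1)$ commutes with $D^2$. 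Because $\cot$ is strictly decreasing and positive on $(0,\tfrac{\pi}{2})$, the entries $\cot^2 q_j$ of $D^2$ are pairwise distinct, so $\eta_L(1)$ is diagonal; feeding this back gives $\eta_L(2)=\eta_R(1)=\eta_R(2)=\eta_L(1)=:m$, and $\eta_L\in\SU(2n)$ forces $\det(m)^2=1$. This already proves the final assertion of the lemma (and the general assertion in this case, with $m_1=m_2=m$).

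Then I would treat the boundary case $q_1=\tfrac{\pi}{2}$, so that $C=\diag(0,C')$ with $C'=\diag(\cos q_2,\dots,\cos q_n)$ invertible and $S=\diag(1,S')$. From $\eta_L(1)C=C\eta_R(1)$, comparing the first row and the first column and using that $C'$ is invertible, the off-$(1,1)$ part of the first row of $\eta_L(1)$ and of the first column of $\eta_R(1)$ must vanish; unitarity then forces the block form $\eta_L(1)=\diag(a,A')$ and $\eta_R(1)=\diag(p,P')$ with $A',P'\in\UN(n-1)$, $a,p\in\T_1$, and likewise $\eta_L(2)C=C\eta_R(2)$ gives $\eta_L(2)=\diag(b,B')$, $\eta_R(2)=\diag(q',Q')$. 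Restricting all four identities to the lower-right $(n-1)\times(n-1)$ blocks now reproduces verbatim the situation of the generic case, with the strictly ordered angles $q_2>\dots>q_n$ in $(0,\tfrac{\pi}{2})$; hence $A'=B'=P'=Q'=:\xi\in\T_{n-1}$ is diagonal. Finally, the $(1,1)$ entries of $\eta_L(1)S=S\eta_R(2)$ and $\eta_L(2)S=S\eta_R(1)$ give $a=q'$ and $b=p$, so $\eta_L(1)=\eta_R(2)=\diag(a,\xi)=:m_1$ and $\eta_L(2)=\eta_R(1)=\diag(b,\xi)=:m_2$, with $\det(m_1m_2)=1$ from $\eta_L\in\SU(2n)$; this is exactly \eqref{C2}--\eqref{C3}.

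The argument is routine; the one place demanding care is the reduction step in the boundary case: one must check that a single zero on the diagonal of $C$ together with the unitarity of $\eta_L(1)$ and $\eta_R(1)$ genuinely forces the $1\oplus(n-1)$ block structure, and that the surviving $(n-1)\times(n-1)$ data satisfy precisely the hypotheses of the generic case, so that the commutation-with-$\cot^2$ argument can simply be quoted. Everything else is bookkeeping.
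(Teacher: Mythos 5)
Your argument is correct; every step checks out, including the delicate block-reduction in the boundary case $q_1=\tfrac{\pi}{2}$. It does, however, take a different route from the paper's, and the difference is instructive. You combine the diagonal and off-diagonal block identities to conclude that $\eta_L(1)$ commutes with $D^2=\diag(\cot^2 q_1,\dots,\cot^2 q_n)$; since this requires $\cos q$ to be invertible, you are forced into a separate treatment of the case $q_1=\tfrac{\pi}{2}$, which costs you the longest part of the write-up. The paper instead works only with the off-diagonal blocks, which give $\eta_L(1)=(\sin q)\,\eta_R(2)\,(\sin q)^{-1}$, and then invokes unitarity: since $\eta_L(1)^{-1}=\eta_L(1)^\dagger$ and $\sin q$ is real diagonal, this single relation forces $\eta_R(2)$ to commute with $(\sin q)^2$, whose entries are pairwise distinct on the \emph{entire} range $\tfrac{\pi}{2}\geq q_1>\dots>q_n>0$. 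Hence $\eta_R(2)$ (and likewise $\eta_R(1)$) is diagonal with no case distinction, and \eqref{C2} follows at once because diagonal matrices commute with $\sin q$. The diagonal blocks are then used only for the finishing touch: $\cos q=m_1(\cos q)m_2^{-1}$ gives $(m_1)_{kk}=(m_2)_{kk}$ for every $k$ with $\cos q_k\neq 0$, which is $k\geq 2$ in general and all $k$ when $q_1<\tfrac{\pi}{2}$, yielding \eqref{C3} and the final assertion simultaneously. So the two proofs rest on the same mechanism (a unitary matrix commuting with a diagonal matrix with distinct eigenvalues is diagonal), but the paper's choice of which diagonal matrix to use -- $(\sin q)^2$ rather than $(\cot q)^2$ -- makes the argument uniform and roughly a third the length. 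Your version is a perfectly acceptable substitute, and the boundary-case analysis you carry out is exactly the kind of structure (the $1\oplus(n-1)$ splitting tied to $\cos q_1=0$) that reappears in the matrix $D(z)$ of equation \eqref{G53}, so the effort is not wasted.
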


\begin{proof}
The block off-diagonal components of the equality \eqref{C1} give
\be
\eta_L(1)=(\sin q)\eta_R(2)(\sin q)^{-1},\quad
\eta_L(2)=(\sin q)\eta_R(1)(\sin q)^{-1}.
\label{C4}
\ee
Since $\eta_L(1)^{-1}=\eta_L(1)^\dag$, the first of these relations implies
$\eta_R(2)=(\sin q)^2\eta_R(2)(\sin q)^{-2}$. As the entries of $(\sin q)$ are all
different, this entails that $\eta_R(2)$ is diagonal, and consequently we obtain the
relations in \eqref{C2} with some diagonal matrices $m_1$ and $m_2$. On the other
hand, the block-diagonal components of \eqref{C1} require that
\be
\cos q =\eta_L(1)(\cos q)\eta_R(1)^{-1},\quad
\cos q =\eta_L(2)(\cos q)\eta_R(2)^{-1}.
\label{C5}
\ee
Since $\cos q_k\neq 0$ for $k=2,\dots,n$, the formula \eqref{C3} follows.
If an addition $\cos q_1\neq 0$, then we also obtain from \eqref{C5} that $a=b$,
i.e., $m_1=m_2=m$ with some $m\in\T_n$.
\end{proof}

\section{Auxiliary material on Poisson-Lie symmetry}
\label{sec:D}

The statements presented here are direct analogues of well-known results \cite{AMM81,GS}
about Hamiltonian group actions with zero Poisson bracket on the symmetry group.
They are surely familiar to experts, although we could not find them in a reference.

Let us consider a Poisson-Lie group $G$ with dual group $G^\ast$ and a symplectic
manifold $P$ equipped with a left Poisson action of $G$. Essentially following Lu
\cite{Lu} (cf.~Remark \ref{rem:D.4}), we say that the $G$-action admits the momentum map
$\psi\colon P\to G^\ast$ if for any $X\in\cG$, the Lie algebra of $G$, and any
$f\in C^\infty(P)$ we have
\be
(\cL_{X_P}f)(p)=\langle X,\{f,\psi\}(p)\psi(p)^{-1}\rangle,\quad\forall p\in P,
\label{D.1}
\ee
where $X_P$ is the vector field on $P$ corresponding to $X$, $\langle.,.\rangle$
stands for the canonical pairing between the Lie algebras of $G$ and $G^\ast$, and
the notation pretends that $G^\ast$ is a matrix group. Using the Hamiltonian vector
field $V_f$ defined by $\cL_{V_f}h=-\{f,h\}$ ($\forall h\in C^\infty(P)$), we can
spell out equation \eqref{D.1} equivalently as
\be
(\cL_{X_P}f)(p)=-\langle X,\big(D_{\psi(p)}R_{\psi(p)^{-1}}\big)
\big((D_p\psi)(V_f(p))\big)\rangle,\quad\forall p\in P,
\label{D.2}
\ee
where $D_p\psi\colon T_p P\to T_{\psi(p)}G^\ast$ is the derivative, and
$R_{\psi(p)^{-1}}$ denotes the right-translation on $G^\ast$ by $\psi(p)^{-1}$.
Since the vectors of the form $V_f(p)$ span $T_p P$, we obtain the following
characterization of the Lie algebra of the isotropy subgroup $G_p<G$ of $p\in P$.

\begin{lemma}\label{lem:D.1}
With the above notations, we have
\be
\mathrm{Lie}(G_p)=\big[\big(D_{\psi(p)}R_{\psi(p)^{-1}}\big)
\big(\mathrm{Im}(D_p\psi)\big)\big]^\perp.
\label{D.3}
\ee
\end{lemma}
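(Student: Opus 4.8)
The statement to prove is Lemma~\ref{lem:D.1}, which identifies the Lie algebra of the isotropy group $G_p$ with the annihilator of the image of $\mathrm{Im}(D_p\psi)$ after right-translation to the identity of $G^\ast$.

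The plan is to read off the answer directly from equation~\eqref{D.2}, which has already been derived from the momentum map condition~\eqref{D.1}. First I would recall that $X\in\mathrm{Lie}(G_p)$ precisely when the fundamental vector field $X_P$ vanishes at $p$, which (since $p$ is a point of the symplectic manifold $P$ and the functions $f$ separate tangent directions) is equivalent to $(\cL_{X_P}f)(p)=0$ for every $f\in C^\infty(P)$. Next I would substitute this into~\eqref{D.2}: the condition becomes
\be
\langle X,\big(D_{\psi(p)}R_{\psi(p)^{-1}}\big)\big((D_p\psi)(V_f(p))\big)\rangle=0
\quad\text{for all }f\in C^\infty(P).
\ee
The key observation, already noted in the text just before the lemma, is that the Hamiltonian vectors $\{V_f(p)\mid f\in C^\infty(P)\}$ span the whole tangent space $T_pP$ (this is where symplecticity of $P$ enters: the map $f\mapsto V_f(p)$ is surjective onto $T_pP$). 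Therefore, as $f$ ranges over $C^\infty(P)$, the vectors $(D_p\psi)(V_f(p))$ range over all of $\mathrm{Im}(D_p\psi)\subset T_{\psi(p)}G^\ast$, and hence $\big(D_{\psi(p)}R_{\psi(p)^{-1}}\big)\big((D_p\psi)(V_f(p))\big)$ ranges over $\big(D_{\psi(p)}R_{\psi(p)^{-1}}\big)\big(\mathrm{Im}(D_p\psi)\big)\subset\mathcal{G}^\ast$. Thus the vanishing condition says exactly that $X$ annihilates this subspace under the pairing $\langle.,.\rangle$ between $\mathcal{G}$ and $\mathcal{G}^\ast$, which is the claimed identity~\eqref{D.3}.

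I do not expect any genuine obstacle here: the lemma is essentially a restatement of~\eqref{D.2} combined with the standard fact that Hamiltonian vector fields span the tangent space at each point of a symplectic manifold, and with the elementary remark that a vector field vanishes at a point iff it annihilates all differentials of functions there. If anything needs a word of care, it is the surjectivity of $f\mapsto V_f(p)$, which follows because the symplectic form is non-degenerate, so every cotangent vector at $p$ is $d_pf$ for some $f$ and every tangent vector is the corresponding Hamiltonian vector; this is the only place the hypothesis that $P$ is symplectic (rather than merely Poisson) is used.
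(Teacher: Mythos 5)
Your proof is correct and is exactly the argument the paper intends: the lemma is read off from \eqref{D.2} using that $X\in\mathrm{Lie}(G_p)$ iff $X_P(p)=0$ and that the Hamiltonian vectors $V_f(p)$ span $T_pP$ by non-degeneracy of the symplectic form. Nothing is missing.
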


This directly leads to the next statement.

\begin{corollary}\label{cor:D.2}
An element $\mu\in G^\ast$ is a regular value of the momentum
map $\psi$ if and only if $\mathrm{Lie}(G_p)=\{0\}$ for every
$p\in\psi^{-1}(\mu)=\{p\in P\mid\psi(p)=\mu\}$.
\end{corollary}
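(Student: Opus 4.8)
The plan is to read the statement off from Lemma \ref{lem:D.1} by a short duality argument in finite-dimensional linear algebra. Recall that, by definition, $\mu\in G^\ast$ is a regular value of $\psi$ exactly when the derivative $D_p\psi\colon T_pP\to T_{\psi(p)}G^\ast$ is surjective for every $p\in\psi^{-1}(\mu)$; so I would fix an arbitrary such $p$ and analyze surjectivity of $D_p\psi$ at that point.

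First I would transport the situation to the Lie algebra. The right-translation $R_{\psi(p)^{-1}}$ on $G^\ast$ induces a linear isomorphism $D_{\psi(p)}R_{\psi(p)^{-1}}\colon T_{\psi(p)}G^\ast\to T_eG^\ast=\mathrm{Lie}(G^\ast)$, and $\mathrm{Lie}(G^\ast)$ is canonically identified with the dual space $\cG^\ast$ through the nondegenerate pairing $\langle.,.\rangle$ used in \eqref{D.1}. Hence $D_p\psi$ is surjective if and only if the subspace
\[
W_p:=\big(D_{\psi(p)}R_{\psi(p)^{-1}}\big)\big(\mathrm{Im}(D_p\psi)\big)\subseteq\cG^\ast
\]
is all of $\cG^\ast$. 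Since $\cG$ and $\cG^\ast$ are finite-dimensional and the pairing is nondegenerate, $\dim W_p+\dim W_p^{\perp}=\dim\cG$, so $W_p=\cG^\ast$ holds precisely when $W_p^{\perp}=\{0\}$.

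Finally I would invoke Lemma \ref{lem:D.1}, which says exactly that this annihilator is the Lie algebra of the isotropy group: $\mathrm{Lie}(G_p)=W_p^{\perp}$. Chaining the equivalences, $D_p\psi$ is surjective if and only if $\mathrm{Lie}(G_p)=\{0\}$; quantifying over all $p\in\psi^{-1}(\mu)$ then gives the corollary, in both directions. I do not expect any genuine obstacle here, as the content is entirely linear-algebraic once Lemma \ref{lem:D.1} is granted; the only thing to watch is the bookkeeping of the identifications — that $D_{\psi(p)}R_{\psi(p)^{-1}}$ is a bijection onto $\mathrm{Lie}(G^\ast)\cong\cG^\ast$ and that the perp in \eqref{D.3} refers to the same pairing — all of which is already built into the statement of the lemma.
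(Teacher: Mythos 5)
Your argument is exactly the one the paper intends: Lemma \ref{lem:D.1} is stated and the corollary is declared to follow "directly," and your duality bookkeeping (surjectivity of $D_p\psi$ $\Leftrightarrow$ the right-translated image is all of $\cG^\ast$ $\Leftrightarrow$ its annihilator $\mathrm{Lie}(G_p)$ vanishes) is precisely that direct deduction. Correct, and the same approach as the paper.
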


Let us further suppose that $\psi\colon P\to G^\ast$ is $G$-equivariant, with respect
to the appropriate dressing action of $G$ on $G^\ast$. Then we have
\be
G_p<G_\mu,\quad\forall p\in\psi^{-1}(\mu).
\label{D.4}
\ee
Here $G_p$ and $G_\mu$ refer to the respective actions of $G$ on $P$ and on $G^\ast$.
Corollary \ref{cor:D.2} and equation \eqref{D.4} together imply the following useful result.

\begin{corollary}\label{cor:D.3}
If $G_\mu$ acts locally freely on $\psi^{-1}(\mu)$, then $\mu$
is a regular value of the equivariant momentum map $\psi$. Consequently,
$\psi^{-1}(\mu)$ is an embedded submanifold of $P$.
\end{corollary}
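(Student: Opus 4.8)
The plan is to deduce the corollary directly by chaining the inclusion \eqref{D.4} with Corollary \ref{cor:D.2}, both of which are at our disposal. The first step is to translate the hypothesis "$G_\mu$ acts locally freely on $\psi^{-1}(\mu)$" into the infinitesimal statement $\mathrm{Lie}(G_p)=\{0\}$ for every $p\in\psi^{-1}(\mu)$. By equivariance of $\psi$, the subgroup $G_\mu$ preserves the fibre $\psi^{-1}(\mu)$, and the $G_\mu$-action on the fibre is the restriction of the $G$-action on $P$; hence the stabilizer of a point $p\in\psi^{-1}(\mu)$ under the $G_\mu$-action is $G_\mu\cap G_p$, which equals $G_p$ by \eqref{D.4}. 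Since $G_p$ is a closed, hence Lie, subgroup of $G$, it is discrete exactly when $\mathrm{Lie}(G_p)=\{0\}$. So local freeness of the $G_\mu$-action on $\psi^{-1}(\mu)$ is precisely the condition that $\mathrm{Lie}(G_p)$ vanishes for all $p$ in the fibre.

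Next I would invoke Corollary \ref{cor:D.2}, whose displayed equivalence says that this is exactly the assertion that $\mu$ is a regular value of $\psi$. For the benefit of the reader it is worth recalling, via Lemma \ref{lem:D.1}, why this is the notion of regular value needed below: $\mathrm{Lie}(G_p)$ is the annihilator of $\big(D_{\psi(p)}R_{\psi(p)^{-1}}\big)\big(\mathrm{Im}(D_p\psi)\big)$, and since $R_{\psi(p)^{-1}}$ is a diffeomorphism of $G^\ast$, this annihilator vanishes if and only if $\mathrm{Im}(D_p\psi)=T_{\psi(p)}G^\ast$, i.e.\ $D_p\psi$ is surjective. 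Thus $\mu$ being a regular value means $\psi$ is a submersion at every point of $\psi^{-1}(\mu)$.

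Finally I would appeal to the submersion level-set theorem: if the smooth map $\psi$ is a submersion along the (possibly empty) level set $\psi^{-1}(\mu)$, then $\psi^{-1}(\mu)$ is an embedded submanifold of $P$, of codimension $\dim G^\ast=\dim G$. There is no genuine difficulty here; the only points requiring care are bookkeeping ones — that the stabilizer of a point of the fibre under the $G_\mu$-action really coincides with its full $G$-stabilizer, which is exactly where \eqref{D.4} enters, and that the ``regular value'' appearing in Corollary \ref{cor:D.2} is indeed surjectivity of $D_p\psi$ so that the standard theorem applies. These are the steps I expect to spell out; the rest is immediate.
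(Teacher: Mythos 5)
Your argument is correct and is exactly the route the paper intends: the text states that Corollary \ref{cor:D.2} and equation \eqref{D.4} together imply the result, and your proof simply fleshes out that chain (local freeness of the $G_\mu$-action gives $\mathrm{Lie}(G_\mu\cap G_p)=\mathrm{Lie}(G_p)=\{0\}$ via \eqref{D.4}, Corollary \ref{cor:D.2} then yields regularity, and the level-set theorem finishes). No discrepancies to report.
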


We finish by a clarifying remark concerning the momentum map.

\begin{remark}\label{rem:D.4}
Let $B$ be the Poisson tensor on $P$, for which $\{f,h\}=B(df,dh)=\cL_{V_h}f$.
We can write $V_h=B^\sharp(dh)$ with the corresponding bundle map
$B^\sharp\colon T^\ast P\to TP$. Any $X\in\cG=T_eG=(T_{e'}G^\ast)^\ast$ extends to a
unique right-invariant 1-form $\vartheta_X$ on $G^\ast$ ($e\in G$ and $e'\in G^\ast$ are
the unit elements). With this at hand, equation \eqref{D.1} can be reformulated as
\be
X_P=B^\sharp(\psi^\ast(\vartheta_X)),
\label{D.5}
\ee
which is a slight variation of the defining equation of the momentum map found in \cite{Lu}.
\end{remark}

\end{document}